\newcommand{\emnote}[1]{{\color{blue}[Erik -- #1]}}
\newcommand{\kzedit}[1]{{\color{magenta}#1}}
\newcommand{\iPi}{\mathit{\Pi}}
\newcommand{\PP}{\mathbb{P}}
\newcommand{\FF}{\mathbb{F}}
\newcommand{\EE}{\mathbb{E}}
\newcommand{\Fs}{\mathcal{F}}
\newcommand{\Ms}{\mathcal{M}}
\newcommand{\Ns}{\mathcal{N}}
\newcommand{\Ts}{\mathcal{T}}
\newcommand{\Os}{\mathcal{O}}
\newcommand{\Zs}{\mathcal{Z}}
\newcommand{\Es}{\mathcal{E}}
\newcommand{\RR}{\mathbb{R}}
\DeclareMathOperator{\tr}{Tr}
\DeclareMathOperator{\svec}{svec}
\DeclareMathOperator*{\argmin}{argmin}
\newtheorem{theorem}{Theorem}
\newtheorem*{theorem*}{Theorem}
\newtheorem{lemma}{Lemma}
\newtheorem{assumption}{Assumption}
\newtheorem{definition}{Definition}
\newtheorem*{proposition*}{Proposition}
\def\old@comma{,}
    \old@comma\discretionary{}{}{}%
\newtheorem{proposition}{Proposition}
\title{\LARGE \bf 
Reinforcement Learning in Non-Stationary  Discrete-Time \\
Linear-Quadratic Mean-Field Games
} 
\author{Muhammad~Aneeq~uz~Zaman, Kaiqing~Zhang, Erik~Miehling, and Tamer~Ba{\c s}ar
\thanks{The authors are affiliated with the Coordinated Science Laboratory, University of Illinois at Urbana--Champaign Urbana, IL 61801.}
\thanks{Research support in part by Grant FA9550-19-1-0353 from AFOSR, and in part by US Army Research Laboratory (ARL) Cooperative Agreement W911NF-17-2-0196.}
}
\begin{document}
\maketitle
\thispagestyle{empty}
\pagestyle{empty}


\begin{abstract}
In this paper, we study large population multi-agent reinforcement learning (RL) in the context of discrete-time linear-quadratic mean-field games (LQ-MFGs). Our setting differs from  most existing work on RL for MFGs, in that we consider a  \emph{non-stationary} MFG over an infinite horizon. We propose an actor-critic algorithm to iteratively compute the \emph{mean-field equilibrium} (MFE) of the LQ-MFG. There are two primary challenges: i) the non-stationarity of the MFG induces a \emph{linear-quadratic tracking} problem, which requires solving a \emph{backwards-in-time (non-causal)} equation that cannot be solved by standard (causal) RL algorithms; ii) Many 
RL {algorithms} assume that the states are sampled from the stationary distribution of a Markov chain (MC), that is, the chain is already \textbf{mixed}, an assumption that is not satisfied for real data sources. We first identify  that the mean-field trajectory follows linear dynamics, allowing the problem to be reformulated as a linear quadratic Gaussian problem. Under this reformulation, we propose an actor-critic algorithm that allows samples to be drawn from an unmixed MC. Finite-sample convergence guarantees for the algorithm are then provided. To characterize the performance of our algorithm in multi-agent RL, we have developed an error bound with respect to the Nash equilibrium of the finite-population game.  
\end{abstract}

\section{Introduction}

\label{sec:intro}
%
%
%
%

Recent years have witnessed the tremendous progress of reinforcement learning (RL) \cite{lowe2017multi,zhang2018fully,zhang2018finite} and planning \cite{best2018dec,zhang2019online} in multi-agent settings
; see \cite{zhang2019multi} for  a recent overview of multi-agent RL (MARL). 
The primary challenge that MARL algorithms face is their scalability 
due to the exponential increase in complexity in the number of agents. 
This difficulty prevents the use of many MARL algorithms in real-world applications, \emph{e.g.}, \cite{couillet2012electrical,cardaliaguet2018mean}.  

To address this challenge, we focus on the 
framework of \emph{mean-field games} (MFGs)
, originally introduced in \cite{huang2003individual,lasry2007mean}. 
The core idea is that the interaction among a large population of agents is well-approximated by the aggregate behavior of the agents, or the \emph{mean-field trajectory}, where the influence of each agent has a negligible effect on the mass. 
Following the 
\emph{Nash certainty equivalence} (NCE) principle \cite{huang2006large}, the solution to an MFG, referred to as a \emph{mean-field equilibrium} (MFE), can be obtained 
by computing a best-response to some mean-field trajectory that is consistent with the aggregate behavior of all agents. This decouples the solution process into the computation of a best-response 
for a fixed mean-field trajectory
, and the update of the mean-field trajectory. Computation of the best-response 
can be done in a model-free fashion using single-agent RL techniques \cite{yang2019global}. 
The computed MFE provides a reasonably accurate approximation of the actual Nash Equilibrium (NE) of the corresponding finite-population dynamic game, a common model for  MARL \cite{huang2007large}. 
Due to this desired property, 
there have been a growing interest in studying  RL  algorithms in MFGs \cite{cardaliaguet2017learning,subramanian2019reinforcement,guo2019learning,elie2019approximate,fu2019actor}. 

Serving as a standard, but significant, benchmark for general MFGs, linear-quadratic MFGs (LQ-MFGs) \cite{huang2007large,bensoussan2016linear}, 
have received significant attention in the literature. Under this setting, the cost function describing deviations in the state from the mean-field state, as well as the magnitude of the control, are assumed to be quadratic while the transition dynamics are assumed to be linear. Intuitively, the cost causes 
each agent to \emph{track} the collective behavior of the population, which, for any fixed mean-field trajectory, leads to a \emph{linear-quadratic tracking} (LQT) subproblem for each agent. 
While most of the work has been done in the continuous-time setting \cite{huang2007large,bensoussan2016linear,huang2018linear}, the discrete-time counterpart, the focus of our paper, has received relatively less attention \cite{moon2014discrete}. 
  
Despite the existence of learning algorithms for specific classes of MFGs  \cite{cardaliaguet2017learning,subramanian2019reinforcement,guo2019learning,elie2019approximate,fu2019actor}, the current literature does not apply to the LQ-MFG setting; see the related work subsection 
for a complete comparison. Most relevant to our setting is the recent independent work of \cite{fu2019actor} 
in which each agent's subproblem, given any fixed mean-field trajectory, is treated as a linear quadratic regulator (LQR) with drift. This is possible due to the restriction to mean-field trajectories that are constant over time (referred to as \emph{stationary mean-fields} in the literature \cite{subramanian2019reinforcement}). 
This is in contrast to the LQT subproblems in the LQ-MFG literature \cite{huang2007large,bensoussan2016linear,huang2018linear} -- a more 
standard setup and one we follow in this paper. While the former admits a \emph{causal} optimal control that can be solved for using RL algorithms for LQR problems \cite{bradtke1993reinforcement,fazel2018global}, the latter leads to a \emph{non-causal} optimal control problem, 
which is well known to be challenging from a model-free perspective 
\cite{kiumarsi2014reinforcement,modares2014linear}. We 
present conditions such that the mean-field trajectory, of the MFE, follows linear dynamics. Hence, we can restrict attention to \emph{linear} mean-field trajectories, allowing for a \emph{causal} reformulation that enables the development of model-free RL algorithms. 

Furthermore, some recent RL algorithms for MFGs assume that data samples are drawn from the stationary distribution of a Markov chain (MC) under some policy \cite{fu2019actor}, and sometimes even done so independently \cite{anahtarci2019fitted}. Though facilitating analyses, data trajectories in practice are usually sampled from an \emph{unmixed} MC. 
Our analyses reflect this more realistic sampling scheme. 


\vspace{5pt}
\noindent\textbf{Contribution.}
We develop a provably convergent RL algorithm for \emph{non-stationary} and \emph{infinite-horizon} discrete-time LQ-MFGs, inspired by the formulations of \cite{huang2007large,moon2014discrete}. 
Our contribution is three-fold: ({\bf 1}) By identifying useful \emph{linearity} properties of the MFE, 
we develop an actor-critic algorithm that 
addresses the non-stationarity of the MFE; as opposed to \cite{fu2019actor,subramanian2019reinforcement}; 
({\bf 2}) We provide a finite-sample analysis of our actor-critic algorithm, under the more realistic sampling setting with unmixed Markovian state trajectories
; ({\bf 3}) We quantify the error bound  of our approximate MFE obtained from the algorithm, as an $\epsilon$-
NE of the original finite-population MARL problem. 



\vspace{5pt}
\noindent\textbf{Related Work.}
Rooted in the original MFG formulation \cite{huang2003individual,lasry2007mean,saldi2018markov}, LQ-MFGs have been proposed mostly for the continuous-time setting \cite{huang2007large,bensoussan2016linear,huang2018linear} and less so for the discrete-time setting \cite{moon2014discrete,uz2020approximate,fu2019actor}. Our previous work \cite{uz2020approximate} proposes an MFE approximation algorithm and does not study the linearity properties of the MFE. 
Recently, the work of \cite{fu2019actor} has also considered learning in discrete-time LQ-MFGs. However, the subproblem therein (given a fixed mean-field trajectory) is modeled as an LQR problem with drift.  This deviation from the convention \cite{huang2007large,bensoussan2016linear,huang2018linear} yields a problem that can be solved using RL algorithms for LQR problems. In particular, an actor-critic algorithm was developed in \cite{fu2019actor} to find the \emph{stationary} MFE. 

Beyond the LQ setting, there is a burgeoning interest in developing RL algorithms for MFGs \cite{yin2013learning,cardaliaguet2017learning,subramanian2019reinforcement,guo2019learning,elie2019approximate}. 
To emphasize the relationship between MFG and RL, most work \cite{subramanian2019reinforcement,guo2019learning,elie2019approximate} has studied the discrete-time setting.  
In particular, \cite{subramanian2019reinforcement,guo2019learning} develop both policy-gradient and Q-learning based algorithms, but with a focus on MFGs with a stationary 
MFE. In contrast,  \cite{elie2019approximate} is the first paper that considers \emph{non-stationary} MFEs. However, the results therein do not apply to the LQ-MFG model of the present paper, since \cite{elie2019approximate} considered finite horizons, and the state-action spaces, though continuous, are required to be convex and compact. 
More recently, \cite{anahtarci2019fitted} proposed a fitted-Q learning algorithm for MFGs, which learns a stationary MFE.  
In fact, as pointed out in \cite{subramanian2019reinforcement},  all prior work was restricted to either stationary MFGs or finite-horizon settings.




The remainder of the paper is organized as follows. 
In Section \ref{sec:model}, we introduce the LQ-MFG problem and discover useful linearity properties of the MFE, offering a characterization of the MFE. We then develop an actor-critic algorithm in Section \ref{sec:RL}, followed by the finite-sample and finite-population analyses in Section \ref{sec:Analysis}. Concluding remarks are provided in Section \ref{sec:Conc}. 

%

\section{Linear-Quadratic Mean-Field Games}
\label{sec:model}

Consider a dynamic game with $N<\infty$ agents playing on an infinite horizon. Each agent $n \in [N]$ is responsible for controlling its own state, denoted by $Z_t^n \in \RR^m$, via selection of control actions, denoted by $U_t^n \in \RR^p$. The state process corresponding to each agent $n$ evolves according to the following linear time-invariant (LTI) dynamics,\begin{align}\label{eq:finitesystem}
	Z_{t+1}^n = A Z_t^n +B U_t^n +W_t^n,
\end{align} 
with state matrix $A \in \RR^{m \times m}$, input matrix $B \in \RR^{m \times p}$, and noise terms $W_t^n$, $t=0,1,\ldots$, independently and identically distributed with Gaussian distribution $\Ns(0,\Sigma_w)$. The pair $(A,B)$ is assumed to be controllable. For each $n$, the initial state $Z_0^n$ is generated by distribution $\Ns(\nu_0,\Sigma_0)$. 
Each agent $n$'s initial state is assumed to be independent of the noise terms, $W_s^n$, $n\in[N]$, $s=1,2,\ldots$, and other agents' initial states, $Z_0^{n'}$, $n'\neq n$. At the beginning of each time step, each agent observes every other agent's state.\footnote{This is a game of full shared history. We will see later that actually full sharing of the state information is not needed, and with each agent accessing only its local state with no memory will be sufficient.} 
Thus, under perfect recall, the information of agent $n$ at time $t$ is $I_t^n = \big((Z_0^1,\ldots,Z_0^N),U_0^n;\ldots;(Z_{t-1}^1,\ldots,Z_{t-1}^N),U_{t-1}^n ;(Z_t^1,\ldots,Z_t^N)\big)$. 
A control policy for agent $n$ at time $t$, denoted by $\pi_t^n$, maps its information $I_t^n$ to a control action $U_t^n\in\RR^p$. The sequence of control policies for agent $n$ is called a control law $\pi^n := (\pi^n_0, \pi^n_1,\ldots$) with the set of all control laws denoted by $\iPi$. 
The joint control law is the collection of control laws over all $n$, denoted by $\pi = (\pi^1,\ldots,\pi^N)$. The agents are coupled via their expected cost functions, which penalizes both the control magnitude and the deviation of each agent's state from the average state. The expected cost for agent $n$ under joint control law $\pi$, denoted by $J_n^N(\pi^n, \pi^{-n})$, is defined as
\begin{align} \label{eq:costfcni}
	\nonumber&J_n^N(\pi^n, \pi^{-n}) :=\\
	&\hspace{-0.8em}\limsup_{T \rightarrow \infty} \frac{1}{T} \sum_{t=0}^{T-1} 
	\EE\bigg[  \Big\lVert Z_t^n-\frac{1}{N-1}\sum_{n' \neq n}\!Z_t^{n'} 
	\Big\rVert^2_{C_Z}\hspace{-1em}+ \big\lVert U_t^n \big\rVert^2_{C_U} \bigg],
\end{align}
where the norms for the state and control terms are taken with respect to the symmetric matrices $0 \leq C_Z\in\RR^{m\times m}, 0 < C_U\in\RR^{p\times p}$, respectively. The pair $(A,C_Z^{1/2})$ is assumed to be observable. The expectation in \eqref{eq:costfcni} is taken with respect to the probability measure induced by the joint control law $\pi$, the initial state distribution, and the noise statistics. 
The state average term in \eqref{eq:costfcni}, can be considered as a reference signal that each agent $n$ aims to track. We refer to this problem as a Linear Quadratic Tracking (LQT) 
problem.


The mean-field approach centers around the introduction of a generic (representative) agent that reacts to the average state, or mean-field trajectory, of the other agents. With some abuse of notation, the state of the generic agent at time $t$ is denoted by $Z_t$ which evolves as a function of control actions, denoted by $U_t\in\RR^p$, in an identical fashion to Eq. \eqref{eq:finitesystem}, i.e.,
\begin{align} \label{eq:dynamics}
Z_{t+1} = A Z_t +B U_t + W_t,
\end{align}
where 
$Z_0$ is generated by distribution $\Ns(\nu_0,\Sigma_0)$ and $W_t$ is an i.i.d. noise process generated according to the distribution $\Ns(0,\Sigma_w)
$, assumed to be independent of the agent's initial state. A generic agent's control policy at any time $t$, denoted by $\mu_t$, maps the generic agent's history at time $t$, given by $I_t = (Z_0,U_0, \ldots,U_{t-1},Z_t)$ 
to a control action $U_t = \mu_t(I_t )\in\RR^p$. The control policy $\mu_t$ is dependent, in an implicit (parametric) manner, on the \emph{mean-field trajectory} (i.e., average state trajectory of the other agents), given by $\bar{Z}=(\bar{Z}_0,\bar{Z}_1,\ldots)$. 
The collection of control policies across time is termed a control law and is denoted by $\mu =  (\mu_0,\mu_1,\ldots)\in\Ms$ where $\Ms$ is defined as the space of admissible control laws. 
The generic agent's expected cost under control law $\mu$, denoted by $J(\mu,
\bar{Z})$, is defined as
\begin{align} \label{eq:costfcn}
J(\mu, \bar{Z}) \!=\! \limsup_{T \rightarrow \infty} \frac{1}{T}\sum_{t=0}^{T-1} \EE \big[ c_t(Z_t, \bar{Z}_t, U_t) \big],
\end{align}
where $c_t(Z_t, \bar{Z}_t, U_t) = \lVert Z_t - \bar{Z}_t \rVert^2_{C_Z} + \lVert U_t \rVert^2_{C_U}$ is the instantaneous cost and the expectation is taken with respect to the control law $\mu$ and initial state and noise statistics. 
The mean-field trajectory $\bar{Z}$, is assumed to belong to the space of deterministic bounded sequences, that is, $\bar{Z}\in\Zs$ where $\Zs = \ell^\infty := \{ x = (x_0, x_1, \ldots) \mid \sup_{t\ge0}|x_t| < \infty \}$.\footnote{This assumption is validated in \cite{moon2014discrete}.} The mean-field trajectory in \eqref{eq:costfcn} can be viewed as a reference signal, resulting in an LQT problem.

To define an MFE, first define the operator $\Lambda:\Ms \to \Zs$ as a function 
from the space of admissible control laws $\Ms$ to the space of mean-field trajectories $\Zs$. Due to the information structure of the problem and the quadratic form of the cost function, 
the policy at any time depends only on the current state $Z_t$ 
and not all of the current information $I_t$ \cite{moon2014discrete}. 
Thus, $\Ms$ is the space of policies that maps the current state to a control action. Note that as \eqref{eq:dynamics}-\eqref{eq:costfcn} is an LQT problem, for a given $\bar{Z}$ the optimal control law will depend on $\bar{Z}$ in an open-loop manner. 
The operator $\Lambda$ is defined as follows: given $\mu \in \Ms$, the mean-field trajectory $\bar{Z} = \Lambda(\mu)$ is constructed recursively as
\begin{align}\label{eq:def_Lambda}
\bar{Z}_{t+1} = A \bar{Z}_{t} +B \mu_{t}(\bar{Z}_t),\quad \bar Z_0 = \nu_0.
\end{align}
If $\bar{Z} = \Lambda(\mu)$, then we refer to $\bar{Z}$ as the mean-field trajectory \emph{consistent with} $\mu$. Similarly, define an operator $\Phi: \Zs \to \Ms$ as a function from a mean-field trajectory to its optimal control law, also called the \emph{cost-minimizing 
controller},
\begin{align}\label{eq:cmc}
\Phi(\bar{Z}) := \argmin_{\mu\in\Ms} J(\mu, \bar{Z}).
\end{align}
The MFE can now be defined as follows.
\begin{definition} [\cite{saldi2018markov}] \label{def:mfe}
	The tuple $(\mu^*, \bar{Z}^*) \in \Ms \times \Zs$ is an MFE if $\mu^*= 
	\Phi(\bar{Z}^*)$ and $\bar Z^*=\Lambda(\mu^*)$. %
\end{definition} 
The trajectory $\bar{Z}^*$ is referred to as the \emph{equilibrium mean-field trajectory} and the controller $\mu^*$ as the \emph{equilibrium controller}. 
Note our MFE is non-stationary, in contrast to \cite{subramanian2019reinforcement,guo2019learning,fu2019actor}.\footnote{We allow for time-varying equilibrium mean-field trajectories. Refer to Definition (A3) in Section 2.2 of \cite{subramanian2019reinforcement} for clarification.} We refer to the corresponding game as a non-stationary LQ-MFG. 
By \cite{moon2014discrete}, the cost-minimizing controller in \eqref{eq:cmc} for any $\bar{Z} \in \Zs$ is given by $\mu' = (\mu'_1(Z_1;\bar Z),\mu'_2(Z_2;\bar Z),\ldots)= \Phi(\bar Z)$ 
with 
\begin{align} \label{eq:u_t}
\mu'_t(Z_t;\bar Z) = G_PPAZ_t + G_P\lambda_{t+1}(\bar Z),
\end{align}
where $G_P := - (C_U + B^T P B)^{-1} B^T$, $P$ is the unique positive definite solution to the discrete-time algebraic Riccati equation (DARE), 
\begin{align} \label{eq:s}
P =  A^T P A + C_Z +  A^T P B G_P P A 
\end{align}
and is guaranteed to exist \cite{bertsekas1995dynamic}. The sequence $\lambda \in \ell^{\infty}$ is generated according to,
\begin{align} \label{eq:costate}
\lambda_t(\bar Z) = -\sum_{k=0}^{\infty}  H_P^{k} C_Z \bar{Z}_{t+k},\,\text{ for }t=0,1,\ldots,
\end{align}
where $H_P := A^T (I + P B G_P)$. Substituting the cost-minimizing control, \eqref{eq:u_t}-\eqref{eq:costate}
, into the state equation of the generic agent, \eqref{eq:dynamics}, the closed-loop dynamics are given by
\begin{align*}
Z_{t+1} 
= H_P^T Z_t - B G_P \sum_{s=0}^{\infty} H_P^s C_Z \bar{Z}_{t+s+1} + W_t.
\end{align*}
By aggregating these dynamics over all agents and invoking Definition \ref{def:mfe}, the equilibrium mean-field trajectory obeys the following recursive expression,
\begin{align} \label{eq:eq_mf}
\bar{Z}^*_{t+1} = H_P^T \bar{Z}^*_t - B G_P \sum_{s=0}^{\infty} H_P^s C_Z 
\bar{Z}^*_{t+s+1},
\end{align}
for $t=0,1,\ldots$, where $\bar{Z}^*_0 = \nu_0$.

Under some mild conditions, the recursion in \eqref{eq:eq_mf} exhibits desirable properties that allow conversion of the LQT problem of \eqref{eq:costfcn} to be expressed as an LQR 
problem (described in the following section). To illustrate these properties, let $M$ be a square matrix of dimension $m$ and define the operator $\Ts:\RR^{m\times m}\to\RR^{m\times m}$ as
\begin{align}\label{eq:Foperator}
\Ts(M) := H_P^T-BG_P\sum_{s=0}^\infty H_P^sC_ZM^{s+1}.
\end{align}
Consider a matrix $F^* \in \RR^{m \times m}$ s.t. $F^* = \Ts(F^*)$; then a candidate for $\bar{Z}^*$ can be characterized by $F^*$ as its mean-field state matrix i.e. $\bar{Z}^*_{t+1} = F^* \bar{Z}^*_t$. We prove that under the following assumption, $F^*$ uniquely determines $\bar{Z}^*$.
\begin{assumption}   \label{asm:boundforcontract}
	Given $A,B,C_Z,C_U$ and $G_P$, $H_P$, where $P$, is the unique 
	positive definite solution of \eqref{eq:s}, we have
	$$T_P:=  \lVert H_P \rVert_2 + 
	\frac{\lVert B G_P \rVert_2 \lVert C_Z \rVert_2}{(1 - \lVert H_P 
	\rVert_2)^2} < 1.$$
\end{assumption}
\noindent Assumption \ref{asm:boundforcontract} above is motivated from the literature \cite{uz2020approximate}, \cite{moon2014discrete}. It is stronger than the standard assumptions, e.g., \cite{uz2020approximate}, but gives rise to desirable linearity properties of the MFE as shown in Proposition \ref{lem:exist_unique_linear} below. This enables the conversion of the LQT problem \eqref{eq:dynamics}-\eqref{eq:costfcn} 
into a state-feedback LQG problem. This conversion is core to the construction of our RL algorithm. Also, since Assumption 1 below implies the primary assumption in \cite{uz2020approximate}, the existence and uniqueness of the MFE is ensured. 
\begin{proposition} \label{lem:exist_unique_linear}
Under Assumption \ref{asm:boundforcontract}, there exists a unique equilibrium mean-field trajectory $\bar{Z}^*$. Furthermore, $\bar{Z}^*$ follows linear dynamics, that is, there exists an $F^* \in \mathbb{F} := \{F \in \RR^{m \times m} : \lVert F \rVert_2 \leq (1+T_P)/2\}$, such that $\bar{Z}^*_{t+1} = F^* \bar{Z}^*_{t}$ for $t = 0,1,\ldots$, with $\bar{Z}_0^* = \nu_0$.
\end{proposition}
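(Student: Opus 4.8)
The plan is to realize the claimed trajectory as the one generated by the fixed point of the operator $\Ts$ of \eqref{eq:Foperator}, produce that fixed point via the Banach fixed-point theorem, and then argue separately that \emph{every} equilibrium mean-field trajectory (i.e.\ every $\bar Z\in\Zs$ solving \eqref{eq:eq_mf}) must coincide with it. The one structural fact I extract from Assumption \ref{asm:boundforcontract} at the outset is that $\lVert H_P\rVert_2 < T_P < 1$, since $T_P$ equals $\lVert H_P\rVert_2$ plus a strictly positive quantity; in particular every series of the form $\sum_{s\ge0}H_P^sC_ZM^{s+1}$ with $\lVert M\rVert_2\le (1+T_P)/2<1$ converges absolutely (because $\lVert H_P\rVert_2\lVert M\rVert_2<1$), so $\Ts$ is well defined on $\mathbb{F}$, which is a closed (hence complete) subset of $(\RR^{m\times m},\lVert\cdot\rVert_2)$.

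Next I would show $\Ts$ is a contraction self-map on $\mathbb{F}$. Using the telescoping identity $M_1^{s+1}-M_2^{s+1}=\sum_{j=0}^{s}M_1^{j}(M_1-M_2)M_2^{s-j}$, for $M_1,M_2\in\mathbb{F}$ one gets $\lVert M_1^{s+1}-M_2^{s+1}\rVert_2\le (s+1)\rho^{s}\lVert M_1-M_2\rVert_2$ with $\rho:=\max(\lVert M_1\rVert_2,\lVert M_2\rVert_2)\le (1+T_P)/2$; summing $\sum_{s\ge0}(s+1)(\lVert H_P\rVert_2\rho)^{s}=(1-\lVert H_P\rVert_2\rho)^{-2}\le(1-\lVert H_P\rVert_2)^{-2}$ then yields $\lVert\Ts(M_1)-\Ts(M_2)\rVert_2\le (T_P-\lVert H_P\rVert_2)\lVert M_1-M_2\rVert_2$, and $T_P-\lVert H_P\rVert_2<1$. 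Combined with $\Ts(0)=H_P^T$, this gives $\lVert\Ts(M)\rVert_2\le \lVert H_P\rVert_2+(T_P-\lVert H_P\rVert_2)\tfrac{1+T_P}{2}$ for $M\in\mathbb{F}$, and a short computation shows this is $\le\tfrac{1+T_P}{2}$ precisely because it reduces to $\lVert H_P\rVert_2\le 1+T_P$; hence $\Ts(\mathbb{F})\subseteq\mathbb{F}$. The Banach fixed-point theorem then provides a unique $F^*\in\mathbb{F}$ with $F^*=\Ts(F^*)$.

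I would then set $\bar Z^*_t:=(F^*)^t\nu_0$. Since $\lVert F^*\rVert_2\le(1+T_P)/2<1$, $\lVert\bar Z^*_t\rVert_2$ decays geometrically, so $\bar Z^*\in\Zs$; and substituting $\bar Z^*_{t+k}=(F^*)^k\bar Z^*_t$ into the right-hand side of \eqref{eq:eq_mf} collapses it to $\big(H_P^T-BG_P\sum_{s\ge0}H_P^sC_Z(F^*)^{s+1}\big)\bar Z^*_t=\Ts(F^*)\bar Z^*_t=F^*\bar Z^*_t=\bar Z^*_{t+1}$, so $\bar Z^*$ is an equilibrium mean-field trajectory with the asserted linear dynamics. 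For uniqueness, note that the equilibrium mean-field trajectories are exactly the fixed points in $\Zs=\ell^\infty$ of $\Psi:=\Lambda\circ\Phi$, which by \eqref{eq:def_Lambda} and \eqref{eq:u_t}--\eqref{eq:costate} sends $\bar Z$ to the sequence with $\Psi(\bar Z)_0=\nu_0$ and $\Psi(\bar Z)_{t+1}=H_P^T\Psi(\bar Z)_t-BG_P\sum_{s\ge0}H_P^sC_Z\bar Z_{t+s+1}$. An analogous but simpler estimate (no telescoping, as $\bar Z$ enters linearly) applied to $D_t:=\Psi(\bar Y)_t-\Psi(\bar Z)_t$ gives $\lVert\Psi(\bar Y)-\Psi(\bar Z)\rVert_\infty\le (T_P-\lVert H_P\rVert_2)\lVert\bar Y-\bar Z\rVert_\infty$, so $\Psi$ is a contraction on the Banach space $\ell^\infty$; its fixed point is therefore unique and equals $\bar Z^*$. (Alternatively, one may invoke the uniqueness of the MFE from \cite{uz2020approximate}, whose hypothesis is implied by Assumption \ref{asm:boundforcontract}.)

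The main obstacle is entirely in the norm bookkeeping of the two contraction estimates: controlling the matrix-power differences $M_1^{s+1}-M_2^{s+1}$ and summing the arithmetic--geometric series so that the Lipschitz constant comes out as exactly $T_P-\lVert H_P\rVert_2$, and then verifying the self-map property $\Ts(\mathbb{F})\subseteq\mathbb{F}$. The radius $(1+T_P)/2$ in the definition of $\mathbb{F}$ is exactly what makes both work — being $<1$ it bounds $\rho$ (so $(1-\lVert H_P\rVert_2\rho)^{-2}\le(1-\lVert H_P\rVert_2)^{-2}$), and it satisfies the inequality $\lVert H_P\rVert_2+(T_P-\lVert H_P\rVert_2)\tfrac{1+T_P}{2}\le\tfrac{1+T_P}{2}$; everything else is a routine invocation of the Banach fixed-point theorem.
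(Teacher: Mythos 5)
Your proof is correct, and its core coincides with the paper's: show that $\Ts$ in \eqref{eq:Foperator} is a contraction on $\FF$ with modulus $T_P-\lVert H_P\rVert_2<1$ (the paper bounds $\lVert F_1^{s}-F_2^{s}\rVert_2< s\lVert F_1-F_2\rVert_2$ and sums the same arithmetic--geometric series, arriving at exactly this constant), invoke the Banach fixed-point theorem on the complete set $\FF$, and verify that $\bar Z^*_t=(F^*)^t\nu_0$ satisfies \eqref{eq:eq_mf}. You differ in two places, both to your credit. First, you verify the self-map property $\Ts(\FF)\subseteq\FF$ inside the proof, via $\Ts(0)=H_P^T$ together with the Lipschitz bound (and your reduction of the inequality to $\lVert H_P\rVert_2\le 1+T_P$ is correct); the paper's proposition proof applies Banach without stating invariance there and only records it afterwards in Lemma \ref{lem:F_contract}, with the slightly sharper bound $\lVert\Ts(F)\rVert_2\le T_P$ — so your bookkeeping closes a small logical gap in the printed argument. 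Second, for uniqueness of the equilibrium mean-field trajectory the paper simply cites \cite{uz2020approximate} (Assumption \ref{asm:boundforcontract} implies the assumption there), whereas you prove it directly by showing that $\Lambda\circ\Phi$ is a contraction on $\ell^\infty$ with the same modulus $T_P-\lVert H_P\rVert_2$; the recursion $D_{t+1}=H_P^T D_t-BG_P\sum_{s\ge0}H_P^s C_Z(\bar Y_{t+s+1}-\bar Z_{t+s+1})$ with $D_0=0$ does give the stated sup-norm bound, so this step is sound and makes the result self-contained, at the cost of a few extra lines that the paper's citation avoids. (Minor nit: the second summand in $T_P$ is only guaranteed nonnegative in general, but $\lVert H_P\rVert_2\le T_P<1$ is all you actually use, so nothing breaks.)
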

\begin{proof}
As Assumption \ref{asm:boundforcontract} above implies Assumption 1 in \cite{uz2020approximate}, the proof of existence and uniqueness of the MFE is obtained in a similar manner.
%
%
To prove that the equilibrium mean-field trajectory evolves linearly, the operator $\Ts$ is shown to be contractive on $\mathbb{F}$. Let $F_1, F_2 \in \mathbb{F}$,
\begin{align} \label{eq:prop_1_work}
& \lVert \Ts(F_1) - \Ts(F_2) \rVert_2 \nonumber \\
& \leq \lVert B G_P \rVert_2 \lVert C_Z \rVert_2 \sum_{s=0}^{\infty} \lVert H_P \rVert^s_2 \big\lVert F_1^{s+1} - F_2^{s+1} \big\rVert_2 ,\nonumber \\
& = \frac{\lVert B G_P \rVert_2 \lVert C_Z \rVert_2}{\lVert H_P \rVert_2} \sum_{s=1}^{\infty} \lVert H_P \rVert^s_2 \big\lVert F_1^{s} - F_2^{s} \big\rVert_2 ,\nonumber \\
& \leq \frac{\lVert B G_P \rVert_2 \lVert C_Z \rVert_2}{\lVert H_P \rVert_2} \sum_{s=1}^{\infty} \lVert H_P \rVert^s_2 \big\lVert F_1 - F_2 \big\rVert_2 \nonumber \\
& \hspace{5.cm} \sum_{r=0}^{s-1} \lVert F_1 \rVert^{s-r-1}_2 \lVert F_2 
\rVert^r_2 ,\nonumber \\
& < \frac{\lVert B G_P \rVert_2 \lVert C_Z \rVert_2}{\lVert H_P \rVert_2} 
\sum_{s=1}^{\infty} s\lVert H_P \rVert^s_2 \big\lVert F_1 - F_2 \big\rVert_2 ,\nonumber \\
& = \frac{\lVert B G_P \rVert_2 \lVert C_Z \rVert_2}{(1 - \lVert H_P 
\rVert_2)^2} \big\lVert F_1 - F_2 \big\rVert_2 .
\end{align}
The second inequality is obtained by the fact that $\lVert F_1 \rVert_2, \lVert F_2 \rVert_2 < 1$ and that for any two square matrices $A,B$, and any $k \in \mathcal{Z}$, $A^{k} - B^{k} = \sum_{l=0}^{k-1} A^{k-l-1} (A - B) B^{l}$. Hence under Assumption \ref{asm:boundforcontract}, the operator $\Ts$ is a contraction mapping. As $\mathbb{F}$ is a complete metric space, using the Banach fixed point theorem, we can deduce the existence of $F^* \in \mathbb{F}$ s.t. $F^* = \mathcal{F}(F^*)$. Hence if we define a sequence $\bar{Z}^*$ s.t. $\bar{Z}^*_0 = \mu_0$ and $\bar{Z}^*_{t+1} = F^* \bar{Z}^*_t$, then it satisfies the dynamics of the equilibrium mean-field trajectory \eqref{eq:eq_mf} and as the equilibrium mean-field trajectory is unique, it follows linear dynamics.
\end{proof}
Notice that as $F^* \in \FF$, $\bar{Z}$ is asymptotically stable. The following property of $\Ts$ will be useful later.
\begin{lemma} \label{lem:F_contract}
Under Assumption \ref{asm:boundforcontract}, if $F \in \FF$ then $\lVert \Ts(F) \rVert_2 \leq T_P$, and hence $\Ts(F) \in \FF$, for all $F \in \FF$.
\end{lemma}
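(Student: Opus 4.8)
The plan is to bound $\lVert \Ts(F) \rVert_2$ by the triangle inequality applied to the definition \eqref{eq:Foperator} and then to recognize the resulting quantity as no larger than $T_P$. First I would write, using submultiplicativity of the spectral norm together with $\lVert H_P^T \rVert_2 = \lVert H_P \rVert_2$,
\begin{align*}
\lVert \Ts(F) \rVert_2 \le \lVert H_P \rVert_2 + \lVert B G_P \rVert_2 \lVert C_Z \rVert_2 \sum_{s=0}^{\infty} \lVert H_P \rVert_2^{s}\, \lVert F \rVert_2^{s+1}.
\end{align*}

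Next, I would use that $F \in \FF$ forces $\lVert F \rVert_2 \le (1+T_P)/2 < 1$ (since $T_P < 1$ by Assumption \ref{asm:boundforcontract}) and that $\lVert H_P \rVert_2 < 1$ (also from $T_P < 1$), so the geometric series converges and equals $\lVert F \rVert_2 / (1 - \lVert H_P \rVert_2 \lVert F \rVert_2)$. Because $\lVert F \rVert_2 \le 1$, an elementary manipulation gives $\lVert F \rVert_2 / (1 - \lVert H_P \rVert_2 \lVert F \rVert_2) \le 1/(1 - \lVert H_P \rVert_2) \le 1/(1 - \lVert H_P \rVert_2)^2$, the last inequality because $1 - \lVert H_P \rVert_2 \le 1$. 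Substituting back yields $\lVert \Ts(F) \rVert_2 \le \lVert H_P \rVert_2 + \lVert B G_P \rVert_2 \lVert C_Z \rVert_2 / (1 - \lVert H_P \rVert_2)^2 = T_P$.

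Finally, to upgrade this bound to membership in $\FF$, I would observe that $T_P < 1$ implies $2 T_P < 1 + T_P$, i.e.\ $T_P \le (1+T_P)/2$; combined with the previous estimate this gives $\lVert \Ts(F) \rVert_2 \le T_P \le (1+T_P)/2$, which is exactly the statement $\Ts(F) \in \FF$.

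I do not expect any genuine obstacle here: the argument is just a triangle inequality, a geometric series, and two monotonicity bounds. The only step worth stating carefully is the trade of the first-power denominator $1/(1-\lVert H_P \rVert_2)$ for the squared denominator $1/(1-\lVert H_P \rVert_2)^2$ that appears in the definition of $T_P$; this looseness costs nothing since the target radius $(1+T_P)/2$ already leaves slack above $T_P$. It is also worth recording explicitly the two norm facts used in the first display, namely $\lVert F^{s+1} \rVert_2 \le \lVert F \rVert_2^{s+1}$ and $\lVert H_P^T \rVert_2 = \lVert H_P \rVert_2$.
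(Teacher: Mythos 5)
Your proposal is correct and follows essentially the same route as the paper's proof: the triangle inequality applied to \eqref{eq:Foperator}, summation of the geometric series to get $\lVert F \rVert_2/(1-\lVert H_P\rVert_2\lVert F\rVert_2)$, the bound $\lVert F\rVert_2 \le 1$ to remove the dependence on $F$, and the trade of $1-\lVert H_P\rVert_2$ for $(1-\lVert H_P\rVert_2)^2$ to reach $T_P$. The final observation that $T_P < 1$ implies $T_P \le (1+T_P)/2$, hence $\Ts(F)\in\FF$, matches the intended conclusion.
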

\emph{Proof.} Let $F \in \FF$; then,
\begin{align*}
\lVert \Ts(F) \rVert_2 & \leq \lVert H_P \rVert_2 + \lVert B G_P \rVert_2 \lVert 
C_Z \rVert_2 \sum_{s=0}^{\infty} \lVert H_P \rVert_2^s \lVert F \rVert_2^{s+1}  \\
& = \lVert H_P \rVert_2 + \frac{\lVert B G_P \rVert_2 \lVert C_Z \rVert_2 \lVert F \rVert_2}{1 - \lVert H_P \rVert_2 \lVert F \rVert_2} \\
& \leq \lVert H_P \rVert_2 + \frac{\lVert B G_P \rVert_2 \lVert C_Z \rVert_2}{1 - \lVert H_P \rVert_2 } < T_P \hspace{5em}\hfill\square
\end{align*}
While agents are aware of the functional forms of the dynamics and cost functions, no agent has knowledge of the true model parameters. As such, we aim to develop an RL algorithm for learning the MFE in the absence of model knowledge. The remainder of the paper is devoted to this task.

\section{Actor-Critic Algorithm for Non-stationary LQ-MFGs}
\label{sec:RL}



The fact that the equilibrium mean-field trajectory follows linear dynamics enables us to develop  RL algorithms for solving the \emph{non-stationary} MFG, 
in contrast to the stationary MFG in \cite{fu2019actor} 
Specifically, as a result of Proposition \ref{lem:exist_unique_linear}, it suffices to find  the MFE  by searching over the set of stable matrices $\FF$ defined therein. Moreover, given any mean-field trajectory $\bar Z$ parameterized by its mean-field state matrix $F \in \FF$, the LQT problem in \eqref{eq:costfcn} can be written as a state-feedback LQG problem with an augmented state $X_t = (Z^T_t,\bar{Z}^T_t)^T$. The augmented state follows linear dynamics
\begin{equation} \label{eq:augdynamics}
X_{t+1} = \bar{A} X_t + \bar{B} U_t + \bar W_t, \hspace{0.2cm}
\end{equation}
with
\begin{equation}\label{eq:augdynamics_2}
\bar{A} = \left( \begin{array}{cc}
A & 0 \\ 0 & F
\end{array} \right), \hspace{0.2cm} \bar{B} =  \left( \begin{array}{c}
B \\ 0
\end{array} \right), \hspace{0.2cm} \bar W_t = \left( \begin{array}{c}
W_t \\ 0
\end{array} \right),
\end{equation}
where $W_t$ is the noise term  in \eqref{eq:dynamics} and consequently $\bar{W}_t \sim \Ns(0,\Sigma_{\bar{w}})$, and $\Sigma_{\bar{w}} = [I \hspace{0.2cm} 0]^T \Sigma_{w} [I \hspace{0.2cm} 0]$. Recall that $F \in \FF$ and as a result it is stable. Accordingly, the cost in \eqref{eq:costfcn} can be written as 
\begin{align} \label{eq:augcostfunc} 
J(K,F) = \limsup_{T \rightarrow \infty}\frac{1}{T}\sum_{t=0}^{T-1} \EE[ c_t(X_t,U_t) ], 
\end{align}
where $c_t(X_t,U_t) = \lVert X_t \rVert^2_{C_X}  + \lVert U_t \rVert^2_{C_U}$, $U_t = - K X_t + \zeta_t$ where the structure is motivated by \eqref{eq:u_t} (with an added exploration term), and 
\begin{align*}
C_X=\left[ \begin{array}{cc}C_Z & -C_Z \\ -C_Z & C_Z\end{array} \right],
\end{align*}
is positive semi-definite. Notice that in the RL setting the controller is closed-loop with the mean-field trajectory \cite{subramanian2019reinforcement,guo2019learning}, in contrast to the full knowledge setting where it has implicit dependence on the mean-field trajectory. 
With some abuse of notation (in relation to \eqref{eq:costfcn}), the cost functional in \eqref{eq:augcostfunc} takes as input the matrix $K$, replacing the control law $\mu$, and the matrix $F$, replacing the mean-field trajectory $\bar Z$ (as a result of Proposition \ref{lem:exist_unique_linear}). As an upshot of the reformulation, the cost-minimizing controller given $F$ can be obtained in a model-free way using RL algorithms that solve state-feedback LQG problems. Hence, by the NCE principle \cite{huang2007large}, 
the MFE can be approximated in a model free setting, by recursively: 
(1) finding the \emph{approximate} cost-minimizing controller $K$ for the system \eqref{eq:augdynamics}-\eqref{eq:augcostfunc}
, and (2) updating mean-field state matrix $F$ in \eqref{eq:augdynamics_2}.

We first deal with finding the cost-minimizing controller for the system \eqref{eq:augdynamics}-\eqref{eq:augcostfunc} in a model-free setting. One method to achieve that is RL for 
state-feedback LQG problem. 
The recent work of \cite{yang2019global} uses a natural policy gradient actor-critic method to solve such a problem, albeit the MC (for sampling) is assumed to be \emph{fully mixed}. 
We adapt this method for an \emph{unmixed but fast-mixing} MC setting 
to find the approximate cost-minimizing controller.

We briefly outline the actor-critic algorithm \cite{yang2019global} and our modification to account for the unmixed MC. 
Each iteration $s \in [S]$ of the algorithm involves two steps, namely the \emph{actor} and the \emph{critic}. The critic observes, the state of the system $X_t$, the control actions $U_t = -K X_t + \zeta_t$ (where $\zeta_t$ is an i.i.d. Gaussian noise for exploration and $K$ is a stabilizing controller
), and the instantaneous cost $c_t$ for $t \in \{0,\ldots,T-1\}$.


The fundamental modification is that by having the total number of timesteps $T$ also conditional on the initial state $X_0$, we can prove convergence of the critic step for the unmixed but fast-mixing 
MC setting. 
This dependence 
is presented in Proposition \ref{prop:PE} in the next section. 
The critic produces an estimate $\hat{\theta}$ of the parameter vector $\theta$ which characterizes the action-value function 
pertaining to $K$. Once the estimate $\hat\theta$ is obtained, the \emph{actor} updates the controller in the direction of the natural policy gradient as given in \cite{yang2019global}. 
After $S$ actor-critic updates we arrive at the approximate cost-minimizing controller for system \eqref{eq:augdynamics}-\eqref{eq:augcostfunc}. As per \cite{yang2019global}, the approximate cost-minimizing controller is close to the actual cost-minimizing controller, provided that $S$ and $T$ are large enough.

Now we 
describe the update of the mean-field state matrix $F$ in \eqref{eq:augdynamics_2} given the cost-minimizing controller $K$ (computed in the previous step). The \emph{state aggregator} is a simulator which computes the new mean-field state matrix $F'$, given $K$, by simulating the mean-field trajectory consistent with controller $K$. Hence, it fulfills the role of operator $\Lambda$ for linear feedback controllers. The state aggregator is similar to the simulators used in \cite{guo2019learning,elie2019approximate}. To obtain $F'$, we first model the behavior of a generic agent with dynamics \eqref{eq:dynamics}, under controller $K$,
\begin{align}\label{eq:state_agg_0}
Z_{t+1} & = A Z_t + B (-K X_t) + W_t,  \nonumber \\
 & = (A - B K_1) Z_t - B K_2 \bar{Z}'_t + W_t, 
\end{align} 
where  $X_t = [Z_t^T, (\bar{Z}'_t)^T]^T$ and $K = [K_1^T, K_2^T]^T$. Notice that the controller $K$ is online 
with respect to the mean-field trajectory $\bar{Z}'$ as per the definition of $K$. By aggregating \eqref{eq:state_agg_0}, the updated mean-field trajectory $\bar Z'$ is shown to follow linear dynamics: 
\begin{align} \label{eq:state_agg}
\bar{Z}'_{t+1} = F' \bar{Z}'_t, \text{ where } F' = A - B(K_1 + K_2).
\end{align}
Hence the state aggregator updates the mean-field state matrix to $F'$ in equation \eqref{eq:augdynamics_2} given the cost-minimizing controller $K$. In the next section we show that if $F$ is stable, $F'$ will be stable as well.

The combination of the actor-critic algorithm for state-feedback LQG and the state aggregator \eqref{eq:state_agg}, as outlined in Algorithm \ref{alg:actorcritic}, 
performs an approximate and data-driven update of the operator $\Ts$ (as in \eqref{eq:eq_mf}). In Section \ref{sec:Analysis}, we prove finite-sample bounds to show convergence of Algorithm \ref{alg:actorcritic}. The critic and actor steps are standard, and thus details have been omitted; see \cite{konda2000actor,fu2019actor}.
\begin{algorithm}
	\caption{Actor-critic for LQ-MFG} \label{alg:actorcritic}
	\begin{algorithmic}[1]
		\STATE {\bf Input}: Number of iterations: $R$, $\{S_r : r \in [R]\}$, $\{T_{s,r}: s \in [S_r], r \in [R] \}$.
		\STATE {\bf Initialize}: $F^{(1)} \in \mathbb{F}$ and stabilizing 
		$K^{(1,1)}$
		\FOR {$r \in [R]$}
		\FOR {$s \in [S_r]$}
		
		\STATE {\bf Critic Step} Compute $\hat\theta^{(s,r)}$ using $X_t, U_t$ and $c_t$ for $t \in \{0,\ldots,T_{(s,r)} - 1\}$ 
		
		\STATE {\bf Actor Step} Compute $K^{(s+1,r)}$ using $\hat\theta^{(s,r)}$ and $K^{(s,r)}$ 
		
		\ENDFOR
		
		\STATE {\bf Update mean-field trajectory} $F^{(r+1)}$ using state aggregator and $K^{(S_r,r)}$ \label{state:aggregate} (by \eqref{eq:state_agg})
		
		\STATE $K^{(1,r+1)} \leftarrow K^{(S_r,r)}$
		\ENDFOR
		\STATE {\bf Output}: $K^{(S_R,R)}, F^{(R)}$
	\end{algorithmic}
\end{algorithm}

\section{Analysis}
\label{sec:Analysis}

We now provide non-asymptotic convergence guarantees for Algorithm \ref{alg:actorcritic}. Moreover, we also provide an error bound for the approximate MFE output, as generated by Algorithm 1, with respect to the NE of the finite population game. 

\subsection{Non-asymptotic convergence}
We begin by presenting the convergence result of the critic step in the algorithm. The output of this step is the parameter vector estimate $\hat\theta^{(s,r)}$ which is shown to be arbitrarily close to the true parameter vector $\theta^{(s,r)}$ given that the number of time-steps in the critic step $T_{s,r}$ is sufficiently large. 
\begin{proposition} \label{prop:PE}
	For any $r \in [R]$ and $s \in [S_r]$, the parameter vector estimate $\hat{\theta}^{(s,r)}$ satisfies
	\begin{align*}
	\lVert \hat\theta^{(s,r)} - \theta^{(s,r)} \rVert^2_F \leq \kappa^{(s,r)}_1   \frac{ \log^6 T_{s,r} }{\sqrt{T_{s,r}}},
	\end{align*}
	with probability at least $1-T^{-4}_{s,r}$. The variable $\kappa^{(s,r)}_1$ 
	is a polynomial in  the initial state $X_0$ and controllers $K^{(1,r)}$ and $K^{(s,r)}$.
\end{proposition}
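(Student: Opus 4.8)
The plan is to adapt the single-trajectory critic analysis of \cite{yang2019global} to an unmixed but geometrically ergodic sampling chain, while tracking precisely how the initial state $X_0$ enters the bound. Fix the epoch $r$ and inner index $s$ and write $K = K^{(s,r)}$. Under the stabilizing controller $K$ the closed-loop augmented system $X_{t+1} = (\bar A - \bar B K)X_t + \bar B \zeta_t + \bar W_t$ has spectral radius strictly below one, so $\{X_t\}$ is a geometrically ergodic Markov chain whose stationary moments are polynomial in $\|\bar A\|$, $\|\bar B\|$, $\|K\|$, $\|\Sigma_{\bar w}\|$ and the exploration covariance. The target $\theta^{(s,r)}$ is the unique solution of the population projected Bellman equation associated with $K$, which is strongly monotone because the injected noise $\zeta_t$ guarantees persistence of excitation; the critic forms $\hat\theta^{(s,r)}$ by running the GTD-type stochastic-approximation recursion of \cite{yang2019global} for $T_{s,r}$ steps along the trajectory. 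The error then splits into (i) a contraction term governed by the strong-monotonicity modulus and (ii) a stochastic-approximation noise term, which is a time-average along the trajectory of functions that are quartic in $X_t$ (the feature vectors $\phi_t$ being quadratic in $(X_t,U_t)$).

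First I would establish a high-probability envelope for the trajectory. Conditionally on $X_0$ one has $X_t \sim \Ns( (\bar A - \bar B K)^t X_0, \Sigma_t )$ with $\Sigma_t$ bounded uniformly by the stationary covariance, so a Gaussian tail bound plus a union bound over $t \in \{0,\dots,T_{s,r}-1\}$ gives $\max_{t<T_{s,r}} \|X_t\| \le c\,(\|X_0\| + \sqrt{\log T_{s,r}})$ on an event of probability at least $1 - T_{s,r}^{-4}/2$, with $c$ polynomial in $\|K\|$ and the system constants; hence $\|\phi_t\| = O(\|X_0\|^2 + \log T_{s,r})$ there. This is the source of the polynomial dependence of $\kappa^{(s,r)}_1$ on $X_0$. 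The dependence on $K^{(1,r)}$ enters because, within epoch $r$, the actor iterates $K^{(s,r)}$ remain in a sublevel set of $J(\cdot, F^{(r)})$ fixed by the warm-start $K^{(1,r)}$, so all of the preceding constants can be bounded uniformly in $s$ in terms of $K^{(1,r)}$.

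Next I would handle the unmixed start. Split $\{0,\dots,T_{s,r}-1\}$ into a burn-in window of length $\tau = \Theta(\log T_{s,r} + \log\|X_0\|)$ — above the mixing time, which is logarithmic since the chain is fast-mixing — and the complementary mixed window. The burn-in window contributes to the noise term at most $(\tau/T_{s,r})\max_{t<\tau}\|\phi_t\|^2 = O((\|X_0\|^2 + \log T_{s,r})^2(\log T_{s,r} + \log\|X_0\|)/T_{s,r})$, which is lower-order precisely because $T_{s,r}$ is chosen as a function of $X_0$. On the mixed window, a synchronous coupling shows the chain is within $T_{s,r}^{-5}$ of stationarity in Wasserstein distance, and — restricted to the envelope event, where the quadratic features are Lipschitz with a bounded constant — this transfers to the feature averages; grouping the mixed indices into $\Theta(T_{s,r}/\tau)$ blocks, coupling each block to an independent near-stationary draw, and applying a Bernstein inequality to the near-independent block sums bounds the mixed-window noise by $\mathrm{poly}(\log T_{s,r})/\sqrt{T_{s,r}}$ times a variance proxy polynomial in $\|X_0\|$ and the controller norms, on an event of probability at least $1 - T_{s,r}^{-4}/2$. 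Feeding (i) and (ii) into the GTD descent recursion with the step sizes of \cite{yang2019global}, and squaring, yields $\|\hat\theta^{(s,r)} - \theta^{(s,r)}\|_F^2 \le \kappa^{(s,r)}_1 \log^6 T_{s,r}/\sqrt{T_{s,r}}$ on the intersection of the envelope and concentration events, whose complement has probability at most $T_{s,r}^{-4}$; the exponent $6$ on the logarithm is the combined cost of the quartic features ($\log^2$), the blocking and union bounds ($\log$), and the final squaring.

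The main obstacle, and the only genuinely new ingredient relative to \cite{yang2019global}, is the unmixed start: one must keep the dependence on $\|X_0\|$ polynomial rather than exponential — which hinges on geometric ergodicity, so that $(\bar A - \bar B K)^t X_0$ decays and the burn-in window is only logarithmic — while simultaneously arranging that the burn-in error is dominated, which is exactly why $T_{s,r}$ is taken adaptively to $X_0$. The remaining ingredients (the persistence-of-excitation lower bound, the quartic-feature moment bounds, and the GTD descent recursion) are quantitative but essentially routine adaptations of the stationary-chain analysis.
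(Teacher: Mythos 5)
Your outline attacks the right new difficulty (the unmixed start and the polynomial dependence on $X_0$), but it takes a genuinely different, and heavier, route than the paper. The paper keeps the minimax/duality-gap analysis of Theorem 4.2 of \cite{yang2019global} (with the explicit duality-gap expression from \cite{wang2017finite}) entirely intact and patches only the two places where stationarity of the sampling chain was used: (i) it lower-bounds the probability of the concentration event $\Es$ for $\lVert X_t\rVert_2^2+\lVert U_t\rVert_2^2$ by writing the conditional law of $(X_t,U_t)$ given $X_0$ in closed form --- a deterministic mean shift $L_K^tX_0$ plus a zero-mean Gaussian $V_t$ with covariance $\Sigma_v^t \preceq \Sigma_v^\infty$ --- then applying the Hanson--Wright inequality to the quadratic fluctuation and a Gaussian tail bound to the cross term $v_t$, with a union bound over $t$ giving $\PP(\Es)\geq 1-T^{-5}$; and (ii) it bounds $\EE[\lVert\phi_t\phi_{t+1}^T\hat\theta\rVert_2^2]$ via fourth moments of these explicit non-stationary marginals. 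Because the closed-loop system is linear-Gaussian, the departure from stationarity is purely a geometrically decaying mean drift, so no mixing-time, burn-in, blocking, coupling, or Bernstein machinery is needed; this is what makes $\kappa_1^{(s,r)}$ polynomial in $X_0$, $K^{(1,r)}$, $K^{(s,r)}$ almost for free. Your envelope/burn-in/Wasserstein-coupling/block-Bernstein scheme would generalize beyond the Gaussian-linear case, but as written it leaves the delicate steps unverified: transferring a Wasserstein bound to expectations of unbounded quadratic and quartic functionals after conditioning on the envelope event requires a truncation argument you only gesture at, the ``near-independent'' blocks need an explicit coupling with quantified residual dependence, and --- most importantly --- the critic of \cite{yang2019global} is analyzed through a primal-dual saddle-point bound whose stochastic error enters via variance terms such as $\EE[\lVert\phi_t\phi_{t+1}^T\hat\theta\rVert_2^2]$, not through a generic ``GTD descent recursion,'' so your noise bounds would still have to be interfaced with that analysis (or a finite-sample stochastic-approximation result under Markov noise re-derived from scratch). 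One smaller correction: the burn-in contribution is lower order simply because $\kappa_1^{(s,r)}$ is allowed to depend polynomially on $X_0$; the choice of $T_{s,r}$ as a function of $X_0$ plays its role only later, in Proposition 3, when the critic bound is converted into an $\epsilon_r$-guarantee.
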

\begin{proof}
	The proof is an adaptation of the technique used in \cite{yang2019global}. 
	We provide the main idea of the proof and how it is modified to  the unmixed 
	but fast-mixing Markov chain setting. The problem of estimating the parameter vector $\theta^{(s,r)}$ is first formulated as a minimax optimization problem. Then, the estimation error $\lVert \hat\theta^{(s,r)} - \theta^{(s,r)} \rVert^2_F$, is shown to be upper bounded by the duality gap of that minimax problem. Using results from \cite{wang2017finite} an explicit expression is obtained for the duality gap.
	
	Most of the proof from Theorem 4.2 \cite{yang2019global} follows except for two differences: 1) A lower bound on the probability of the event $\Es$ 	in 	equation (5.20) needs to be established, and 2) An upper bound for the 	expression $\EE [\lVert \phi_t \phi^T_{t+1} \hat\theta \rVert^2_2]$, where $\phi_t := \svec \Big( \big(X_t^T, U_t^T \big)^T \big( X_t^T, U_t^T \big) \Big)$, 
	needs to be found. First we lower bound the probability of the event $\Es_{t}$. In the following, we define $T =  T_{s,r}$ for clarity. Let us define the event $\Es$,
	\begin{align*}
	\Es := \bigcap_{t=0}^{T-1}\big\{ \big\lvert \big\lVert X_{t} \big\rVert_2^2 + \big\lVert 	
	U_{t} \big\rVert_2^2 - \EE\big[ \big\lVert X_{t} \big\rVert_2^2 + \big\lVert 
	U_{t} 	\big\rVert_2^2 \big] \big\rvert < C_1 \big\}. 
	\end{align*} 
	Let us first define an event $\Es_{t}$ such that,
	\begin{align*}
	& \Es_{t} := \big\{ \big\lvert \big\lVert X_{t} \big\rVert_2^2 + \big\lVert U_{t} \big\rVert_2^2 - \EE\big[ \big\lVert X_{t} \big\rVert_2^2 + \big\lVert U_{t} \big\rVert_2^2 \big] \big\rvert \geq C_{1,t} \big\},
	\end{align*}
	for some $C_{1,t}>0$ and $t \in \{0,\ldots,T-1\}$. We want to establish an upper bound for the probability $\PP(\Es_{t})$. Suppose $K$ is a stabilizing controller for \eqref{eq:augdynamics} and define $L_K := \bar{A} - \bar{B}K$. Now let us consider control policies of the form 
	\begin{align*}
	U_{t} = -K X_{t} + \zeta_{t},
	\end{align*}
	where $\zeta_{t}$ is generated i.i.d. with distribution $\Ns(0,\sigma^2 I)$. The marginal distributions of $X_{t}$ and $U_{t}$ under filtration $\Fs_{0}$ (having observed $X_0$) is,
	\begin{align} \label{eq:X_t}
	X_{t} & \sim \Ns\Big(L_K^{t} X_0, \Sigma_{X,t} \Big), t \in \{0,\ldots,T-1\},\\
	U_{t} & \sim \Ns\Big(K L_K^{t} X_0, K \Sigma_{X,t} K^T + \sigma^2 I 
	\Big), \nonumber
	\end{align}
	where 
	\begin{align*}
	\Sigma_{X,0} = 0, \Sigma_{X,t} = \sum_{s=0}^{t-1} L_K^s \Sigma_X (L_K^s)^T, t \in \{1,\ldots,T-1\}
	\end{align*} 
	and $\Sigma_X = \Sigma_{\bar{w}} + \sigma^2 \bar{B} \bar{B}^T$. Let 
	\begin{align*}
	\tilde{X}_{t} := X_{t} - L_K^{t} X_0, \tilde{U}_{t}:= U_{t} - K L_K^{t} X_0.
	\end{align*}
	By definition $\tilde{X}_{t}$ and $\tilde{U}_{t}$ are zero mean Gaussian random vectors. Now we consider the quantity,
	\begin{align} \label{eq:expr_1}
	& \big\lVert X_{t} \big\rVert_2^2 + \big\lVert U_{t} 	\big\rVert_2^2 - \EE\big[ \big\lVert X_{t} \big\rVert_2^2 + \big\lVert U_{t} 	\big\rVert_2^2 	\big] = \big\lVert \tilde{X}_{t} \big\rVert_2^2 + \big\lVert \tilde{U}_{t} 	\big\rVert_2^2 - \nonumber \\
	& \EE\big[ \big\lVert \tilde{X}_{t} \big\rVert_2^2 + \big\lVert \tilde{U}_{t} \big\rVert_2^2 \big]  +  2 	\big\langle L^{t-1} X_t, \tilde{X}_{t} \big\rangle + 2 	\big\langle K L^{t-1} X_t, \tilde{U}_{t} \big\rangle. 
	\end{align}
	Let $V_{t} = [\tilde{X}^T_{t}, \tilde{U}^T_{t}]^T$ which, using \eqref{eq:X_t} and the definition of $V_{t}$, has the distribution $\Ns(0,\Sigma^{t}_v)$, where
	\begin{align*}
	\Sigma^{t}_v & = \Bigg(\begin{array}{cc}
	\Sigma_{X,t} & -\Sigma_{X,t} K^T \\
	-K \Sigma_{X,t} &  K \Sigma_{X,t} K^T + \sigma^2 I
	\end{array} \Bigg).
	\end{align*}
	Using the definition of $\Sigma_{X,t}$ we deduce
	\begin{align*}
	\Sigma^{t+1}_v -  \Sigma^{t}_v = \Bigg(\begin{array}{c}
	I \\
	-K 
	\end{array} \Bigg) L^{t}_K \Sigma_X (L^{t}_K)^T
	\Bigg(\begin{array}{c}
	I \\
	-K 
	\end{array} \Bigg)^T ,
	\end{align*}
	which means $\Sigma^{t+1}_v \geq \Sigma^{t}_v$. Moreover, for a stabilizing $K$, $X_{t}$ and consequently $V_t$ converges to a stationary distribution as $t \rightarrow \infty$. Let us denote the covariance of this distribution by $\Sigma_v^{\infty}$. And since $\Sigma^{t+1}_v \geq \Sigma^{t}_v \implies \Sigma^{\infty}_v \geq \Sigma^{t}_v$, using $V_{t}$ the expression in \eqref{eq:expr_1} can be written as,
	\begin{align*}
	& \lVert X_{t} \rVert_2^2 + \lVert U_{t} \rVert_2^2 - \EE\big[ \lVert X_{t} \rVert_2^2 + \lVert U_{t} \rVert_2^2 \big] \\
	& = \lVert V_{t} \rVert_2^2 - \EE \big[ \lVert V_{t} \rVert_2^2 \big] + \underbrace{2 \Big\langle \Big[\begin{array}{c} L^{t} X_0 \\ K L^{t} X_0 \end{array}\Big], V_{t} \Big\rangle}_{v_{t}}.
	\end{align*}
	By definition $v_{t}$ has the distribution $\Ns (0, \sigma_v^t )$ where
	\begin{align*}
	\sigma_v^t = 4\bigg[\begin{array}{c} L^{t} X_0 \\ K L^{t} X_0 \end{array}\bigg]^T \Sigma^{t}_v \bigg[\begin{array}{c} L^{t} X_0 \\ K 	L^{t} X_0 \end{array}\bigg].
	\end{align*}
	This is upper bounded by,
	\begin{align}
	\sigma_v^t & \leq 4(1+\lVert K \rVert^2_F) \lVert\Sigma^{t}_v \rVert_2 \lVert X_0 \rVert^2_2 \nonumber\\
	& \leq 4(1+\lVert K \rVert^2_F) \lVert\Sigma^{\infty}_v \rVert_2 \lVert X_0 
	\rVert^2_2 := \sigma^{\infty}_v.
	\end{align}
	Now let us define events $\Es^1_{t}$, $\Es^2_{t}$, 
	$\Es^3_{t}$ 
	and 
	$\Es^4_{t}$ for $t \in \{0,\ldots,T-1\}$ 
	such that
	\begin{align*}
	& \Es^1_{t} := \big\{ \big\lvert \lVert V_{t} \rVert_2^2 - \EE \big[ \lVert V_{t} \rVert_2^2 \big] \big\rvert +  \big\lvert v_{t} \big\rvert \geq C_{1,t} \big\}, \\
	& \Es^2_{t} := \big\{ \big\lvert  \lVert V_{t} \rVert_2^2 - \EE \big[ \lVert V_{t} \rVert_2^2 \big] \big\lvert < C_{2,t} \,\,\, \& \,\,\, \big\lvert v_{t} \big\rvert < C_3 \big\},  \\
	& \Es^3_{t} := \big\{ \big\lvert  \lVert V_{t} \rVert_2^2 - \EE \big[ \lVert V_{t} \rVert_2^2 \big] \big\lvert \geq C_{2,t} \big\}, \hspace{0.25cm}\Es^4_{t} := \big\{ \big\lvert v_{t} \big\rvert \geq C_3 \big\},
	\end{align*}
	where $C_{2,t} , C_3 > 0$ and $C_{1,t} = C_{2,t} + C_3$. From triangle 
	inequality, we have
	\begin{align*}
	\big\lvert \lVert V_{t} \rVert_2^2 - \EE \big[ \lVert V_{t} 	\rVert_2^2 	\big] + v_{t} \big\rvert \leq \big\lvert \lVert V_{t} \rVert_2^2 - \EE \big[ \lVert 	V_{t} 	\rVert_2^2 \big] \big\rvert + \big\lvert v_{t} \big\rvert
	\end{align*}
	and therefore $\Es_{t} \subset \Es^1_{t}$. Moreover, $\Es^1_{t} \cap \Es^2_{t} = 	\varnothing \implies \Es^1_{t} \subset \big(\Es^2_{t} \big)^c \implies 	\Es_{t} \subset \big(\Es^2_{t} \big)^c = \Es^3_{t} \cup \Es^4_{t}$, and hence
	\begin{align} \label{eq:upper_bound_events}
	\PP(\Es_{t}) \leq \PP(\Es^3_{t} \cup \Es^4_{t}) \leq \PP(\Es^3_{t}) + 
	\PP(\Es^4_{t}).
	\end{align} 
	First we establish an upper bound for $\PP (\Es^3_{t})$ for $t \in \{0,\ldots,T-1\}$. 
	Using Hanson-Wright inequality \cite{rudelson2013hanson},
		\begin{align} \label{eq:P_eps_3}
		\PP(\Es^3_{t}) \leq 2   e^{-C   \min \big((C_{2,t})^2   \lVert 
			\Sigma^{t}_v \rVert_F^{-2}, C_{2,t}   \lVert \Sigma^{t}_v 
			\rVert^{-1}_2 \big)},
		\end{align}
		for a $C>0$. Substituting 
		\begin{align*}
		C_{2,t} = k_1 \log(T/3) \lVert \Sigma^{t}_v \rVert_2, \hspace{0.2cm} k_1 \geq \max\bigg(\frac{6}{C}, \frac{p+2m}{\log (T/3)}\bigg)
		\end{align*}
		and using a method similar to \cite{yang2019global} we get
		\begin{align} \label{eq:P_eps_3_1}
		& (C_{2,t})^2   \lVert \Sigma^{t}_v \rVert_F^{-2} = k^2_1   (\log (T/3))^2 \lVert \Sigma^{t}_v \rVert_2^2   \lVert \Sigma^{t}_v \rVert_F^{-2} \geq \\
		&  k^2_1   (\log (T/3))^2   (p+2m)^{-1} \geq k_1   \log (T/3) =  C_{2,t} \lVert \Sigma^{t}_v \rVert^{-1}_2. \nonumber
		\end{align}
		The first inequality is due to the relation between the induced and 
		Frobenius norms and the second inequality is due to definition of $k_1$. Using equations \eqref{eq:P_eps_3}, \eqref{eq:P_eps_3_1} and the expression for $C_{2,t}$, 
		\begin{align} \label{eq:P_eps_1}
		\PP(\Es^3_{t}) \leq \frac{2 T^{-C   k_1}}{3} \leq \frac{2 T^{-6}}{3},
		\end{align}
	where $C$ is an absolute constant. Now we establish an upper bound for $\PP(\Es^4_{t})$. Using Gaussian tail bounds, 
	\begin{align*}
	\PP(\Es^4_{t}) \leq 2 e^{ -C^2_3/(2\sigma_v^t) }.
	\end{align*}
	Substituting 
	\begin{align*}
	C_3 = k_2   \log (T/6)   \sqrt{2\sigma^{\infty}_v}, \hspace{0.2cm}k_2 \geq \max (6,(\log (T/6))^{-1}),
	\end{align*}
	we get
	\begin{align}  \label{eq:P_eps_2}
	\PP(\Es^4_{t}) \leq \frac{T^{-k_2}}{3} \leq \frac{T^{-6}}{3}.
	\end{align}
	The first inequality is due to $\frac{C^2_3}{2\sigma_v^t} \geq k_2^2 (\log (T/6))^2 \geq k_2 \log (T/6)$. Now using \eqref{eq:upper_bound_events}, 
	\eqref{eq:P_eps_1} 
	and \eqref{eq:P_eps_2}, we deduce
	\begin{align}
	\PP(\Es_{t}) \leq T^{-6}.
	\end{align}
	Finally we obtain a time invariant upper bound on $C_{1,t}$:
	\begin{align}
	C_{1,t} & = C_{2,t} + C_3  \\
	& \leq k_1 \log(T/3) \lVert \Sigma^{\infty}_v \rVert_2 + k_2 \log (T/6) \sqrt{2\sigma^{\infty}_v} =: C_1 . \nonumber
	\end{align}
	Let us define another event,
	\begin{align*}
	\Es^*_{t} := \big\{ \big\lvert \big\lVert X_{t} \big\rVert_2^2 
	+ 
	\big\lVert U_{t} \big\rVert_2^2- \EE\big[ \big\lVert X_{t} \big\rVert_2^2 + 
	\big\lVert 
	U_{t} 
	\big\rVert_2^2 \big] \big\rvert \geq C_1 \big\}.
	\end{align*}
	Since $C_1 \geq C_{1,t}$, $\Es^*_{t} \subset \Es_{t} \implies 
	\PP(\Es^*_{t}) \leq P(\Es_{t}) \leq T^{-6}$. Moreover, $\PP(\cup_{i=0}^{T-1} \Es^*_{t} ) \leq \sum_{i=0}^{T-1} \PP(\Es^*_{t}) \leq T^{-5}$. 
	Now we can bound 
	\begin{align*} 
	\PP(\Es) = 1 - \PP(\cup_{i=0}^{T-1} \Es^*_{t}) \geq 1 - T^{-5} .
	\end{align*}
	Now we bound the expression $\EE [\lVert \phi_t \phi^T_{t+1} 
	\hat\theta \rVert^2_2]$ 
	in proof of Theorem 4.2 
	in \cite{yang2019global}:
	\begin{align} \label{eq:bound_quant}
	& \EE [\lVert \phi_t \phi^T_{t+1} \hat\theta \rVert^2_2] = \EE [\hat\theta^T 
	\phi_{t+1} 	\phi^T_t \phi_t \phi^T_{t+1} \hat\theta]  \nonumber \\ 
	& = \EE [\phi^T_t \phi_t \hat\theta^T \phi_{t+1} \phi^T_{t+1} 
	\hat\theta] \leq \EE \big[\phi^T_t \phi_t  \big] \lVert \hat\theta 
	\rVert^2_2 \EE 
	\big[ \lVert \phi_{t+1} \phi^T_{t+1} \rVert_2 \big] \nonumber \\
	& \leq \EE \big[ \phi^T_t \phi_t  \big] 
	\lVert \hat\theta \rVert^2_2 \EE \big[ \lVert \phi_{t+1} \phi^T_{t+1} 
	\rVert_F 
	\big] \nonumber\\
	& \leq \EE \big[ \phi^T_t \phi_t  \big] \lVert \hat\theta \rVert^2_2 \EE 
	\big[ 
	\lVert \phi_{t+1} \rVert^2_F \big] \nonumber\\
	& = \EE \big[ \phi^T_t \phi_t  \big] \lVert \hat\theta \rVert^2_2 \EE \big[ 
	\tr(\phi_{t+1} \phi^T_{t+1} )\big] \nonumber\\ 
	& = \EE \big[ \phi^T_t \phi_t  \big] \lVert \hat\theta \rVert^2_2 \EE \big[ 
	\phi^T_{t+1} \phi_{t+1}\big].
	\end{align}
	The first inequality is due to the Cauchy-Schwarz inequality. Using the 
	definition of $\phi_t$,
	$\phi^T_t \phi_t = 
	\big\lVert \big( X_t^T, U_t^T \big)^T \big\rVert^4_2$. 
	Hence an upper bound for the expectations on the RHS of \eqref{eq:bound_quant} can be obtained by using the fourth moments of the marginal distributions of $X_t$ and $U_t$ given in \eqref{eq:X_t}. Hence the bound on $\EE [\lVert \phi_t \phi^T_{t+1} \hat\theta \rVert^2_2]$ would be a function of the initial state $\lVert X_1 \rVert_2$, the stationary distribution $\Sigma_v^{\infty}$ and $\sigma^{\infty}_v$, and the bound on $\lVert \hat\theta \rVert^2_2$, imposed by the critic \cite{yang2019global} using a projection operator.
\end{proof}

Note that in Proposition \ref{prop:PE}, 
$\kappa^{(s,r)}_1$ is a polynomial in 
the initial state, $X_0$. 
This dependence is due to the MC not being fully mixed. 
Having proved a finite\kzedit{-}sample bound on the estimation error for the critic step, 
we now state the convergence guarantee for the actor-critic algorithm for fixed mean-field trajectory. In particular, the approximate cost-minimizing controller found by the actor-critic $K^{(S_r,r)}$, can be brought arbitrarily close to the cost-minimizing controller $K^{(*,r)}$, by choosing the number of iterations of critic, $T_{s,r}$, and actor-critic, $S_r$, sufficiently large. 
\begin{proposition} \label{prop:RL-LQG}
	For any $r \in [R]$, let $K^{(1,r)}$ be a stabilizing controller and $S_r$ be chosen such that $S_r \geq  \kappa_2 \lVert \Sigma_{K^{(*,r)}} \rVert_2 \log\big( 2\big(J(K^{(1,r)},F^{(r)}) - J(K^{(*,r)},F^{(r)})\big) / \epsilon_r \big)$, 
	for any $\epsilon_r > 0$, where $\Sigma_{K^{(*,r)}}$ is the covariance matrix of stationary distribution induced by controller $K^{(*,r)}$. Moreover, let $T_{s,r} \geq (\kappa^{(s,r)}_1)^{5/2} \epsilon^{-5}_r$ for $s \in S_r$. Then, with probability at least $1 - \epsilon^{10}_r$,
	\begin{align*}
	& J(K^{(S_r,r)},F^{(r)}) - J(K^{(*,r)},F^{(r)}) \leq \epsilon_r, \\
	& \hspace{0.9cm} \lVert K^{(S_r,r)} - K^{(*,r)} \rVert_F \leq \sqrt{ \kappa_3 \epsilon_r }
	\end{align*}
	and $K^{(s,r)}$ are stabilizing for $s \in S_r$. The variable $\kappa^{(s,r)}_1$ 
	is dependent on $K^{(1,r)}, K^{(s,r)}$ and initial state $X_0$ (as in Proposition \ref{prop:PE}). The parameter $\kappa_2$ and $\kappa_3$ are absolute constants. 
\end{proposition}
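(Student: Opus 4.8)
The plan is to adapt the global convergence analysis of the natural policy gradient (NPG) actor-critic for ergodic-cost LQG from \cite{yang2019global}, with two modifications: we substitute the critic accuracy bound of Proposition \ref{prop:PE} (valid for an unmixed but fast-mixing chain) for the stationary-distribution critic bound used there, and we carefully propagate the resulting dependence on the initial state $X_0$. Throughout, $F^{(r)}$ is fixed, so $K \mapsto J(K, F^{(r)})$ is the ergodic cost of the augmented LQR system \eqref{eq:augdynamics}--\eqref{eq:augcostfunc}, and I write $J^{(*,r)} := J(K^{(*,r)}, F^{(r)})$. I would first recall the two structural facts about this cost on the set of stabilizing controllers: (i) \emph{gradient domination}, $J(K, F^{(r)}) - J^{(*,r)} \le \lambda^{-1}\|\nabla_K J(K, F^{(r)})\|_F^2$ for a problem-dependent $\lambda>0$ \cite{fazel2018global}, which underlies geometric convergence; and (ii) \emph{local quadratic growth}, $J(K, F^{(r)}) - J^{(*,r)} \ge c\|K - K^{(*,r)}\|_F^2$ near $K^{(*,r)}$, which converts the cost bound into the claimed bound on $\|K^{(S_r,r)} - K^{(*,r)}\|_F$.

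Next I would analyze one actor step. If the critic estimate were exact ($\hat\theta^{(s,r)} = \theta^{(s,r)}$), the actor step is an exact NPG step, and by the argument of \cite{fazel2018global,yang2019global} the cost contracts geometrically, $J(K^{(s+1,r)}, F^{(r)}) - J^{(*,r)} \le (1-\rho)\big(J(K^{(s,r)}, F^{(r)}) - J^{(*,r)}\big)$ with rate $\rho = \Theta\big(1/\|\Sigma_{K^{(*,r)}}\|_2\big)$. With the inexact critic, a perturbation argument gives a one-step recursion of the form $J(K^{(s+1,r)}, F^{(r)}) - J^{(*,r)} \le (1-\rho)\big(J(K^{(s,r)}, F^{(r)}) - J^{(*,r)}\big) + C^{(s,r)}\|\hat\theta^{(s,r)} - \theta^{(s,r)}\|_F$, where $C^{(s,r)}$ is polynomial in the current controller norms, and the same perturbation bound is tight enough to show that $K^{(s+1,r)}$ remains stabilizing whenever the critic error is below a threshold; stability of all intermediate $K^{(s,r)}$ then follows by induction, which in turn keeps all the polynomial constants (including the $\kappa_1^{(s,r)}$ of Proposition \ref{prop:PE}) uniformly bounded along the trajectory, since the iterates stay in a fixed sublevel set of $J(\cdot, F^{(r)})$.

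I would then set the critic budget so that the additive error is of order $\rho\epsilon_r$ at every step: by Proposition \ref{prop:PE} the critic error is at most $\kappa_1^{(s,r)}\log^6 T_{s,r}/\sqrt{T_{s,r}}$, so taking $T_{s,r} \ge (\kappa_1^{(s,r)})^{5/2}\epsilon_r^{-5}$ makes it sufficiently small with room to spare for the $\log^6$ factor (the exponents $5/2$ and $5$ over-provision precisely to absorb the polylog and the $\rho, C^{(s,r)}$ factors into the $\kappa_1^{(s,r)}$ accounting). Unrolling the one-step recursion over $S_r \ge \kappa_2\|\Sigma_{K^{(*,r)}}\|_2\log\big(2(J(K^{(1,r)},F^{(r)}) - J^{(*,r)})/\epsilon_r\big)$ iterations drives the geometric term below $\epsilon_r/2$ and bounds the accumulated additive error by $\sum_{s}(1-\rho)^{S_r-s}O(\rho\epsilon_r) = O(\epsilon_r)$, hence $J(K^{(S_r,r)},F^{(r)}) - J^{(*,r)} \le \epsilon_r$; feeding this into fact (ii) gives $\|K^{(S_r,r)} - K^{(*,r)}\|_F \le \sqrt{\kappa_3\epsilon_r}$. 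For the failure probability, Proposition \ref{prop:PE} holds per step with probability at least $1 - T_{s,r}^{-4} \ge 1 - \epsilon_r^{20}$ (taking $\kappa_1^{(s,r)} \ge 1$ without loss of generality, so $T_{s,r} \ge \epsilon_r^{-5}$), and a union bound over the $S_r = O(\log(1/\epsilon_r))$ critic calls gives total failure probability at most $S_r\epsilon_r^{20} \le \epsilon_r^{10}$ for $\epsilon_r$ small enough.

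I expect the main obstacle to be the coupled stability-and-boundedness induction: one must simultaneously show that a single actor step with a slightly-wrong natural gradient cannot destabilize the closed loop, that the per-step critic constant $\kappa_1^{(s,r)}$ — which itself depends on the random, evolving controller $K^{(s,r)}$ and on $X_0$ — stays uniformly bounded along the trajectory, and that the accumulated critic errors do not compound across iterations. The first of these is the crux, since it is what licenses invoking the NPG contraction analysis of \cite{yang2019global} at every step; the $X_0$-dependence, inherited from the unmixed-chain critic bound of Proposition \ref{prop:PE}, is then carried through purely as a bookkeeping factor inside $\kappa_1^{(s,r)}$.
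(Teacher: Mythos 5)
Your proposal is correct and follows essentially the same route as the paper: the paper simply cites Theorem 4.3 of \cite{yang2019global} for the cost bound and the stability of the iterates (with the critic accuracy supplied by Proposition \ref{prop:PE} in the unmixed-chain setting, exactly as you do), and cites Lemma D.4 of \cite{fu2019actor} for converting the cost gap into the bound on $\lVert K^{(S_r,r)} - K^{(*,r)} \rVert_F$, which is your quadratic-growth step. Your sketch merely unpacks these citations (gradient domination, perturbed NPG recursion, stability induction, union bound over critic calls), and the bookkeeping matches the stated constants and failure probability.
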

The first inequality and the stability guarantee in Proposition \ref{prop:RL-LQG} follows from Theorem 4.3 in \cite{yang2019global} and the second inequality can be obtained from proof of Lemma D.4 in \cite{fu2019actor}. 
Next, we provide the non-asymptotic convergence guarantee 
for Algorithm \ref{alg:actorcritic}. We prove that the output of Algorithm \ref{alg:actorcritic}, also called the \emph{approximate} MFE $(K^{(S_R,R)}, F^{(R)})$, approaches the MFE of the LQ-MFG $(K^*, F^*)$. We also provide an upper bound on the difference in cost $J$ under the approximate and the  exact MFE.

\begin{theorem} \label{thm:conv_guarantee}
	For any $r \in [R]$, let $\epsilon_r$ be defined as
	\begin{align}
	& \epsilon_r = \kappa_3^{-1} \min \Bigg( \frac{\epsilon^2}{2^{2r+4} \lVert B \rVert_2^2}, \frac{\epsilon^2}{2}, \frac{(1-T_P)^2}{ 8 \lVert B \rVert_2^2} \Bigg) 
	\end{align}
	and the number of iterations $R$ satisfy,
	$R \geq \log \big( 2\lVert F^{(1)} - F^* \rVert_2 \epsilon^{-1} \big) / \log(1/T_P)$, (where $T_P$ is defined in Assumption \ref{asm:boundforcontract}) and  
	$\epsilon > 0$. Then, with probability at least $1 - \epsilon^5$, $F^{(r)} \in \FF$, $K^{(1,r)}$ is stabilizing for $r \in [R]$, and
	\begin{align*}
	& \lVert F^{(R)} - F^* \rVert_2 \leq \epsilon, \lVert K^{(S_R,R)} - K^* \rVert_2 \leq (1 + D_0 ) \epsilon, \\
	& \hspace{0.8cm} J(K^{(S_R,R)},F^{(R)}) - J(K^*,F^*) \leq D_1 \epsilon ,
	\end{align*}
	where $D_0$, $D_1$ are absolute constants. 
\end{theorem}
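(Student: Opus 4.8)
The plan is to run an induction over the outer iteration index $r \in [R]$, maintaining the invariant that $F^{(r)} \in \FF$ and that $K^{(1,r)}$ is stabilizing for the augmented system with mean-field matrix $F^{(r)}$ --- this is exactly what makes Proposition \ref{prop:RL-LQG} applicable at level $r$. The base case $r=1$ holds by the initialization of Algorithm \ref{alg:actorcritic}. For the inductive step, since the invariant holds at level $r$, Proposition \ref{prop:RL-LQG} yields, with probability at least $1-\epsilon_r^{10}$, that $K^{(S_r,r)}$ is stabilizing (hence so is $K^{(1,r+1)}$) and $\lVert K^{(S_r,r)} - K^{(*,r)}\rVert_F \le \sqrt{\kappa_3\epsilon_r}$. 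The a priori boundedness of $\kappa^{(s,r)}_1$ needed to pre-select $T_{s,r}$ also follows on this event, since the $K^{(s,r)}$ stay in a fixed bounded, stabilizing set.

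The conceptual crux is to recognize that the state aggregator \eqref{eq:state_agg} applied to the \emph{exact} cost-minimizing controller $K^{(*,r)}$ reproduces the operator $\Ts$ of \eqref{eq:Foperator}: plugging the structure \eqref{eq:u_t} into $F' = A - B(K_1 + K_2)$ gives $A - B(K_1^{(*,r)}+K_2^{(*,r)}) = \Ts(F^{(r)})$. Hence $F^{(r+1)} - \Ts(F^{(r)}) = -B\big((K_1^{(S_r,r)}-K_1^{(*,r)}) + (K_2^{(S_r,r)}-K_2^{(*,r)})\big)$, so $\lVert F^{(r+1)} - \Ts(F^{(r)})\rVert_2 \le \sqrt 2\,\lVert B\rVert_2\,\lVert K^{(S_r,r)} - K^{(*,r)}\rVert_F \le \sqrt 2\,\lVert B\rVert_2\sqrt{\kappa_3\epsilon_r}$. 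Combining this with $\lVert \Ts(F^{(r)})\rVert_2 \le T_P$ from Lemma \ref{lem:F_contract}, the third term in the $\min$ defining $\epsilon_r$ is chosen precisely so that $\lVert F^{(r+1)}\rVert_2 \le T_P + (1-T_P)/2 = (1+T_P)/2$, i.e.\ $F^{(r+1)} \in \FF$, closing the induction. A union bound over $r \in [R]$ together with the smallness of the $\epsilon_r$ bounds the total failure probability by $\sum_r \epsilon_r^{10} \le \epsilon^5$.

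On the good event I would then run the contraction argument underlying Proposition \ref{lem:exist_unique_linear}: since $\Ts$ is a contraction on $\FF$ with modulus at most $T_P$ and $F^* = \Ts(F^*)$, the triangle inequality gives $\lVert F^{(r+1)} - F^*\rVert_2 \le T_P\lVert F^{(r)} - F^*\rVert_2 + \sqrt 2\,\lVert B\rVert_2\sqrt{\kappa_3\epsilon_r}$. The first term in the $\min$ defining $\epsilon_r$ makes the per-step perturbation at most $\epsilon\,2^{-(r+1)}$, so unrolling the recursion and summing a geometric series caps the accumulated perturbation at $\epsilon/2$, while the choice of $R$ forces $T_P^{R-1}\lVert F^{(1)} - F^*\rVert_2 \le \epsilon/2$; together these give $\lVert F^{(R)} - F^*\rVert_2 \le \epsilon$. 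For the controller bound, split $\lVert K^{(S_R,R)} - K^*\rVert_2 \le \lVert K^{(S_R,R)} - K^{(*,R)}\rVert_2 + \lVert K^{(*,R)} - K^*\rVert_2$: the first term is at most $\sqrt{\kappa_3\epsilon_R}\le\epsilon$ (second term of the $\min$), and the second is controlled by the Lipschitz dependence of the cost-minimizing controller on $F$ --- from \eqref{eq:u_t}, $K_1^{(*,r)}$ is independent of $F$ and $K_2^{(*,r)} = G_P\sum_{s\ge 0}H_P^s C_Z (F^{(r)})^{s+1}$, so the matrix-power-difference identity used in \eqref{eq:prop_1_work} gives $\lVert K^{(*,R)} - K^*\rVert_2 \le D_0\lVert F^{(R)} - F^*\rVert_2 \le D_0\epsilon$ with $D_0 = \lVert G_P\rVert_2\lVert C_Z\rVert_2/(1-\lVert H_P\rVert_2)^2$. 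Finally, the cost gap decomposes as $J(K^{(S_R,R)},F^{(R)}) - J(K^{(*,R)},F^{(R)})$, at most $\epsilon_R$ by Proposition \ref{prop:RL-LQG}, plus $J(K^{(*,R)},F^{(R)}) - J(K^*,F^*)$; since the mean-field block of the augmented state decays geometrically, only a perturbation proportional to $\lVert F^{(R)} - F^*\rVert_2$ survives the long-run average, bounding the latter by an absolute constant times $\epsilon$ and yielding the claimed $D_1\epsilon$.

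The main obstacle is keeping the probabilistic subroutine guarantees and the deterministic fixed-point contraction properly interleaved: one must verify the invariant ``$F^{(r)}\in\FF$, $K^{(1,r)}$ stabilizing'' is preserved at every iteration (so Proposition \ref{prop:RL-LQG} and the boundedness of the $\kappa_1^{(s,r)}$ remain valid), while simultaneously the three-way $\min$ defining $\epsilon_r$ must be tight enough to give both $\FF$-invariance and a telescoping error converging to $\epsilon$ --- two requirements pulling on the same constant, which is why $\epsilon_r$ takes its particular form. The identification $\Ts(F^{(r)}) = A - B(K_1^{(*,r)}+K_2^{(*,r)})$ is the bridge that converts the LQG subroutine's controller-error bound into a per-step error bound on the mean-field fixed-point iteration.
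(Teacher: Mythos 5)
Your proposal follows essentially the same route as the paper's proof for the main recursion: the triangle-inequality split of $\lVert F^{(r+1)}-F^*\rVert_2$ through $\bar F^{(r+1)}=\Ts(F^{(r)})$, using precisely the identification $A-B\big(K_1^{(*,r)}+K_2^{(*,r)}\big)=\Ts(F^{(r)})$; the contraction bound $T_P\lVert F^{(r)}-F^*\rVert_2$; the per-step perturbation $\sqrt{2}\lVert B\rVert_2\sqrt{\kappa_3\epsilon_r}$ from Proposition \ref{prop:RL-LQG}; the $\FF$-invariance via Lemma \ref{lem:F_contract} and the third entry of the $\min$; the union bound; and the $K$-bound with the same $D_0$ via the matrix-power-difference identity from \eqref{eq:prop_1_work}. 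The one genuine divergence is the cost gap: you split it as $\big[J(K^{(S_R,R)},F^{(R)})-J(K^{(*,R)},F^{(R)})\big]+\big[J(K^{(*,R)},F^{(R)})-J(K^*,F^*)\big]$, using Proposition \ref{prop:RL-LQG} for the first piece and a Lipschitz-in-$F$ argument for the second, whereas the paper compares $J(K^{(S_R,R)},F^{(R)})$ and $J(K^*,F^*)$ directly through the trace formula \eqref{eq:def_J}--\eqref{eq:def_sigma_k} and obtains $D_2\lVert K^{(R)}-K^*\rVert_F+D_3\lVert K^{(R)}-K^*\rVert_F^2$. Your variant is sound --- in fact, since the noise enters only the agent-state block and $F^{(R)},F^*$ are stable, the long-run average cost depends on $K$ only through $K_1$, and $K_1^{(*,R)}=K_1^*$, so your second piece is even easier to control than your ``perturbation proportional to $\lVert F^{(R)}-F^*\rVert_2$'' phrasing suggests; the paper's direct computation buys an explicit constant $D_1$ in terms of $D_2,D_3$. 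One step you state too quickly: ``$K^{(S_r,r)}$ is stabilizing (hence so is $K^{(1,r+1)}$)'' --- $K^{(1,r+1)}$ is the same matrix, but it must stabilize the \emph{new} system $(\bar A^{(r+1)},\bar B)$; this requires the additional observation that the closed loop is block upper-triangular with diagonal blocks $A-BK_1^{(S_r,r)}$ and $F^{(r+1)}$, so stability follows from $F^{(r+1)}\in\FF$ (the paper devotes a separate cost-finiteness argument to exactly this transfer). With that one line added, your argument is complete.
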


\begin{proof}
	{Let us first split $\lVert F^{(r+1)} - F^* \rVert_2$ using the 
		quantity 
		$\bar{F}^{(r+1)} = \Fs(F^{(r)})$,
		\begin{align} \label{eq:triangle_Fd+1}
		\lVert F^{(r+1)} - F^* \rVert_2 \leq \lVert F^{(r+1)} - \bar{F}^{(r+1)} \rVert_2 + \lVert \bar{F}^{(r+1)} - F^* \rVert_2 .
		\end{align}
		First we bound the term $\lVert \bar{F}^{(r+1)} - F^* \rVert_2 $:
		\begin{align} \label{eq:F_bar_d+1}
		\lVert \bar{F}^{(r+1)} - F^* \rVert_2 = \lVert \Fs(F^{(r)}) - \Fs(F^*) \rVert \leq T_P \lVert F^{(r)} - F^* \rVert_2 .
		\end{align}}
	As for the first term in \eqref{eq:triangle_Fd+1}, we know from the definition of state aggregator \eqref{eq:state_agg} that
	\begin{align*}
	& F^{(r+1)} = A - B (K^{(S_r,r)}_1 + K^{(S_r,r)}_2) , \\ 
	& \bar{F}^{(r+1)} = A - B (K^{(*,r)}_1 + K^{(*,r)}_2) .
	\end{align*}
	{Then,
		\begin{align} \label{eq:F_d+1}
		& \lVert F^{(r+1)} - \bar{F}^{(r+1)} \rVert_2 \nonumber \\
		& \leq \lVert B \rVert_2 \big( \lVert K^{(S_r,r)}_1 - K^{(*,r)}_1 
		\rVert_2 + \lVert K^{(S_r,r)}_2 - K^{(*,r)}_2 \rVert_2 \big) \nonumber\\
		& \leq \sqrt{2}\lVert B \rVert_2 \lVert K^{(S_r,r)} - K^{(*,r)} \rVert_F \nonumber\\
		& \leq \lVert B \rVert_2 \sqrt{ 2 \kappa_3 \epsilon_r } \leq \epsilon   2^{-r-2} .
		\end{align}
		The inequality holds with probability at least $1 - \epsilon^{10}$. The third inequality is due to Proposition \ref{prop:RL-LQG} and the last inequality is due to choice of $\epsilon_r$. Hence, using \eqref{eq:F_bar_d+1} and \eqref{eq:F_d+1},
		\begin{align*}
		\lVert F^{(r+1)} - F^* \rVert_2 \leq T_P \lVert F^{(r)} - F^* \rVert_2 + \epsilon 2^{-r-2},
		\end{align*}
		which holds with probability $1 - \epsilon^{10}$. Furthermore with a union bound argument with $R = \Os(\epsilon^{-5})$, 
		it holds with probability $1-\epsilon^{5}$ that
		\begin{align*}
		\lVert F^{(R)} - F^* \rVert_2 \leq T_P^R \lVert F^{(1)} - F^* \rVert_2 + \epsilon/2 .
		\end{align*}
		As $T_P < 1$ (Assumption \ref{asm:boundforcontract}) and $R$ is defined as in the statement of the theorem, we arrive at 
		\begin{align} \label{eq:final_bound_F}
		\lVert F^{(R)} - F^* \rVert_2 \leq \epsilon ,
		\end{align}
		with probability at least $1 - \epsilon^{5}$.} {Now we prove that 
		$F^{(r)} \in \FF$ for $1 \leq r \leq R$ using recursion. Let 
		$F^{(r)} \in \FF$ for $0 \leq r < R$; then, we can prove that $F^{(r+1)} 
		\in \FF$ with probability at 
		least $1 - \epsilon^{10}$,
		\begin{align*}
		& \lVert F^{(r+1)} \rVert_2 \leq \lVert F^{(r+1)} - \bar{F}^{(r+1)} \rVert_2 + \lVert \bar{F}^{(r+1)} \rVert_2 \\
		& \leq \lVert B \rVert_2 \sqrt{ 2 \kappa_3 \epsilon_r } + T_P \leq (1+T_P)/2 ,
		\end{align*}
		with probability at least $1 - \epsilon^{10}$. The second inequality is 
		due to \eqref{eq:F_d+1} and Lemma \ref{lem:F_contract}. Hence 
		$F^{(r)} \in 
		\FF$ for $1 \leq r \leq R$ with probability at least $1 - 
		\epsilon^{5}$ using a union bound argument.} 
	
	Now we 	prove the second inequality in the statement of the Theorem. Using the 	triangle inequality,
	\begin{align*}
	\lVert K^{(S_R,R)} - K^* \rVert_2 \leq & \lVert K^{(S_R,R)} - K^{(*,R)} 
	\rVert_2 
	+ \\
	& \lVert K^{(*,R)} - K^* \rVert_2 .
	\end{align*}
	From Proposition \ref{prop:RL-LQG}, we know that,
	\begin{align*}
	\lVert K^{(S_R,R)} - K^{(*,R)} \rVert_2 \leq \sqrt{ \kappa_3 \epsilon_R} \leq \epsilon ,
	\end{align*}
	with probability at least $1 - \epsilon^{10}$. Using \eqref{eq:u_t}, \eqref{eq:costate}, and linearity of the mean-field trajectory $\bar{Z}^{(R)}$, 
	\begin{align*}
	K^{(*,R)} = \Bigg[ G_P P A  \hspace{0.3cm} -G_P \sum_{s=0}^{\infty} H_P^s C_Z \big(F^{(R)}\big)^{s+1} \Bigg] .
	\end{align*}
	Similarly, $K^*$ can be defined as,
	\begin{align*}
	K^* = \Bigg[ G_P P A  \hspace{0.3cm} -G_P \sum_{s=0}^{\infty} H_P^s C_Z (F^*)^{s+1} \Bigg] .
	\end{align*}
	Using these definitions,
	\begin{align*}
	& \lVert K^{(*,R)} - K^* \rVert_2 \leq \\
	& \lVert G_P \rVert_2 \sum_{s=0}^{\infty} \lVert H_P \rVert_2^s \lVert C_Z \rVert_2 \lVert \big(F^{(R)}\big)^{s+1} - (F^*)^{s+1} \rVert_2.
	\end{align*}
	Using the matrix manipulations used in \eqref{eq:prop_1_work}, in the proof of Proposition \ref{lem:exist_unique_linear}, 
	we obtain 
	\begin{align*}
	\lVert K^{(*,R)} - K^* \rVert_2 & \leq D_0 \lVert F^{(R)} - F^* \rVert_2 \leq D_0 \epsilon,
	\end{align*}
	with probability at least $1 - \epsilon^{5}$, where 
	\begin{align*}
	D_0 := \frac{\lVert G_P \rVert_2 \lVert C_Z	\rVert_2 }{(1 - \lVert H_P \rVert_2)^2}.
	\end{align*}
	The last inequality is due to \eqref{eq:final_bound_F}, and thus we obtain,
	\begin{align} \label{eq:final_bound_K}
	\lVert K^{(S_R,R)} - K^* \rVert_2 \leq (1 + D_0 ) \epsilon ,
	\end{align}
	with probability at least $1 - \epsilon^{5}$. {Now we prove that $K^{(1,r+1)}$ is a stabilizing controller for system $(\bar{A}^{(r+1)}, \bar{B})$, given that $K^{(1,r)}$ is a stabilizing controller for system $(\bar{A}^{(r)}, \bar{B})$ for $0 \leq r < R$}. {We know from \cite{yang2019global} that for any stabilizing $K$,
	\begin{align} \label{eq:def_J}
	J(K,F) = \tr \big( \big(C_X + K^T C_U K \big) \Sigma_K \big) + \sigma^2 \tr(C_U) ,
	\end{align}
	where $\Sigma_K$ is the covariance matrix for the stationary distribution induced by $K$ and is the unique positive definite solution to the Lyapunov equation,}
	\begin{align*}
	\Sigma_K = \Sigma_X + \big(\bar{A} - \bar{B} K \big) \Sigma_K \big(\bar{A} - \bar{B} K \big)^T.
	\end{align*}
	Hence $\Sigma_K$ satisfies,
	\begin{align} \label{eq:def_sigma_k}
	\Sigma_K & = \sum_{t=0}^{\infty} \big(\bar{A} - \bar{B} K \big)^t \Sigma_X \big(\big(\bar{A} - \bar{B} K \big)^t\big)^T \nonumber \\
	& = \sum_{t=0}^{\infty} \big(A - B K_1 \big)^t \Sigma^{11}_X \big(\big(A - B K_1 \big)^t\big)^T , 
	\end{align}
	where $\Sigma^{11}_X = \Sigma_{w} + \sigma^2 B B^T$. From Proposition \ref{prop:RL-LQG} we know that since $K^{(1,r)}$ is stabilizing controller for system $(\bar{A}^{(r)},\bar{B})$, $K^{(S_r,r)}$ is also stabilizing, and hence $J(K^{(S_r,r)}, F^{(r)})$ is finite. Also from Algorithm \ref{alg:actorcritic} we know that $K^{(1,r+1)} = K^{(S_r,r)}$. Using \eqref{eq:def_J} and \eqref{eq:def_sigma_k} we can 	deduce that $J(K^{(1,r+1)}, F^{(r+1)}) = J(K^{(1,r+1)}, F^{(r)})$ as long as 	$F^{(r+1)}$ is stable. This is true with probability at least $1 - \epsilon^{10}$. Finally $J(K^{(1,r+1)}, F^{(r)}) = J(K^{(S_r,r)}, F^{(r)}) < \infty$, and hence $K^{(1,r+1)}$ is a stabilizing controller for system $(\bar{A}	^{(r+1)}, \bar{B})$ for $0 \leq r < R$ with probability at least $1 - \epsilon^{5}$ using a union bound argument.
	
	Now we prove the third inequality from the theorem. For the rest of the proof we will introduce $K^{(R)} := K^{(S_R,R)}$ for conciseness. We begin by obtaining an expression for the cost of a stabilizing controller $K$. 
	Since $K^{(R)}$ and $K^*$ are stabilizing controllers, there exists some $\rho_1 \in (0,1)$ and $c > 0$ s.t. 
	\begin{align*}
	\lVert (\bar{A} - \bar{B} K^{(R)})^t \rVert_2 \leq c \rho^t_1,\hspace{0.2cm} \lVert (\bar{A} - \bar{B} K^*)^t \rVert_2 \leq c \rho^t_1.
	\end{align*}
	Using \eqref{eq:def_J} and \eqref{eq:def_sigma_k}, we can write $J(K^{(R)},F^{(R)})$ and $	J(K^*,F^*)$ 
	\begin{align} \label{eq:cost_bound}
	& J(K^{(R)},F^{(R)}) = \sigma^2 \tr(C_U) + \nonumber \\
	& \tr \Big( \big( C_X + K^{(R)} C_U \big( K^{(R)} \big)^T \big)\sum_{t=0}^{\infty} L_{R}^t \Sigma^{11}_X \big( L_{R}^t \big)^T \Big) ,\nonumber \\
	& J(K^{*},F^{*}) = \sigma^2 \tr(C_U) + \nonumber \\
	& \tr \Big( \big( C_X + K^{*} C_U \big( K^{*} \big)^T \big) \sum_{t=0}^{\infty} L_{*}^t \Sigma^{11}_X \big( L_{*}^t \big)^T \Big) ,
	\end{align}
	where we have introduced 
	\begin{align*}
	L_{R}:= A - B K^{(R)}_1, \hspace{0.2cm }L_{*} := A - B K^{*}_1.
	\end{align*}
	Using \eqref{eq:augdynamics}-\eqref{eq:augdynamics_2} and the fact that $K^{(R)} = [K^{(R)}_1,K^{(R)}_2]$ and $K^* = [K^*_1,K^*_2]$, we can deduce that $\lVert L^t_{R} \rVert_2 \leq c \rho^t_1$ and $\lVert L^t_* \rVert_2 \leq c \rho^t_1$. Now taking the difference of both terms in equation \eqref{eq:cost_bound},
	\begin{align} \label{eq:fin_sample_cost}
	& J(K^{(R)},F^{(R)}) - J(K^{*},F^{*})  \nonumber \\
	& = \tr \Big( \big( C_X + K^{(R)} C_U \big( K^{(R)} \big)^T 	\big)\sum_{t=0}^{\infty} \big(L_{R}^t \Sigma^{11}_X \big( L_{R}^t \big)^T \Big) 	\nonumber\\
	& \hspace{0.4cm} - \tr \Big( \big( C_X + K^{*} C_U \big( K^{*} \big)^T \big) \sum_{t=0}^{\infty} L_{*}^t \Sigma^{11}_X \big( L_{*}^t \big)^T \Big) , \nonumber \\
	& = \tr \Big( \big( C_X + K^{*} C_U \big( K^{*} \big)^T \big) \nonumber \\
	& \sum_{t=0}^{\infty} \big(L_{R}^t \Sigma^{11}_X \big( L_{R}^t \big)^T - L_{*}^t \Sigma^{11}_X \big( L_{*}^t \big)^T \big) \Big) + \nonumber \\
	& \tr\Big( \big( K^{*} C_U \big( K^{*} \big)^T - K^{(R)} C_U \big( K^{(R)} \big)^T \big) \sum_{t=0}^{\infty} L_{*}^t \Sigma^{11}_X \big( L_{*}^t \big)^T \Big).
	\end{align}
	The first term on the RHS of \eqref{eq:fin_sample_cost} can be upper bounded using the fact that $C_X + K^{*} C_U \big( K^{*} \big)^T$ is symmetric: 
	\begin{align}
	& \big\lVert \big( C_X + K^{*} C_U \big( K^{*} \big)^T \big) \big\rVert_F \nonumber \\
	& \tr \Big(\sum_{t=0}^{\infty} \big(L_{R}^t \Sigma^{11}_X \big( L_{R}^t \big)^T - L_{*}^t \Sigma^{11}_X \big( L_{*}^t \big)^T \big) \Big) \nonumber \\
	& \hspace{2cm} \leq m^{3/2} \big\lVert \big( C_X + K^{*} C_U \big( K^{*} \big)^T \big) \big\rVert_F \nonumber \\
	& \hspace{2cm} \sum_{t=0}^{\infty} \Big\lVert L_{R}^t \Sigma^{11}_X \big( L_{R}^t \big)^T - L_{*}^t \Sigma^{11}_X \big( L_{*}^t \big)^T \Big\rVert_2 .
	\end{align}
	This inequality is due to the fact that for any square matrix $A$ with 	dimension $m \times m$, 
	\begin{align*}
	\tr(A) & \leq \sum_{i,j} \lvert A_{i,j} \rvert \leq m \sqrt{\sum_{i,j} (A_{i,j})^2 } \\
	& = m \lVert A \rVert_F \leq m^{3/2} \lVert A \rVert_2.
	\end{align*} 
	Now we bound the following term 
	\begin{align} 
	& \sum_{t=0}^{\infty} \Big\lVert L_{R}^t \Sigma^{11}_X \big( L_{R}^t \big)^T - L_{*}^t \Sigma^{11}_X \big( L_{*}^t \big)^T \Big\rVert_2 \nonumber \\
	& = \sum_{t=0}^{\infty} \Big\lVert L_{R}^t \Sigma^{11}_X \big( L_{R}^t - L_{*}^t \big)^T + \big( L_{R}^t -  L_{*}^t \big) \Sigma^{11}_X \big( L_{*}^t \big)^T \Big\rVert_2 \nonumber \\
	& \leq \big\lVert \Sigma^{11}_X \big\rVert_2 \sum_{t=0}^{\infty} \big( 		\big\lVert L_{R}^t \big\rVert_2 + \big\lVert L_{*}^t \big\rVert_2 \big)		\big\lVert L_{R}^t - L_{*}^t \big\rVert_2 \nonumber 
	\end{align}
	\begin{align}\label{eq:cost_term_1}
	& \leq 2c \big\lVert \Sigma^{11}_X \big\rVert_2 \sum_{t=0}^{\infty} \rho_1^t
	\big\lVert L_{R}^t - L_{*}^t \big\rVert_2 \nonumber \\
	& = 2c\big\lVert \Sigma^{11}_X \big\rVert_2 \sum_{t=0}^{\infty}  \rho_1^t \bigg\lVert \sum_{\tau = 0}^{t-1} L_{R}^{t-\tau-1} \big(L_{R} - L_{*} \big) L_{*}^{\tau} \bigg\rVert_2 \nonumber \\
	& \leq 2c^3 \big\lVert \Sigma^{11}_X \big\rVert_2 \sum_{t=0}^{\infty}  t \rho_1^{t} \big\lVert L_{R} - L_{*} \big\rVert_2 \nonumber \\
	& \leq 2c^3 \frac{\big\lVert \Sigma^{11}_X \big\rVert_2 \big\lVert B \big\rVert_2 \rho_1}{ (1 - \rho_1)^2}  \big\lVert K^{(R)} - K^* \big\rVert_2 \nonumber \\
	& \leq 2 c^3 \frac{\big\lVert \Sigma^{11}_X \big\rVert_2 \big\lVert B \big\rVert_2 \rho_1}{ (1 - \rho_1)^2}  \big\lVert K^{(R)} - K^* \big\rVert_F .
	\end{align}
	The second equality is obtained by following a procedure similar to \eqref{eq:prop_1_work} in the proof of Proposition \ref{lem:exist_unique_linear}. Now we upper bound the second term in \eqref{eq:fin_sample_cost}:
	\begin{align} \label{eq:cost_term_2}
	& \hspace{-0.3cm} \tr\Big( \big( K^{*} C_U \big( K^{*} \big)^T - K^{(R)} C_U 
	\big( K^{(R)} \big)^T \big) \sum_{t=0}^{\infty} L_{*}^t \Sigma^{11}_X 
	\big( L_{*}^t \big)^T \Big) \nonumber \\
	& \leq \big\lVert K^{*} C_U \big( K^{*} \big)^T - K^{(R)} C_U \big( K^{(R)} 
	\big)^T \big\rVert_F \nonumber \\
	& \hspace{4cm} \tr\Big(\sum_{t=0}^{\infty} L_{*}^t \Sigma^{11}_X 
	\big( L_{*}^t \big)^T \Big) .
	\end{align}
	The inequality is due to the fact that $K^{*} C_U \big( K^{*} \big)^T - K^{(R)} C_U \big( K^{(R)} \big)^T$ is symmetric. The summation in \eqref{eq:cost_term_2} can be upper bounded as follows:
	\begin{align} \label{eq:cost_term_3}
	& \tr\Big(\sum_{t=0}^{\infty} L_{*}^t \Sigma^{11}_X \big( L_{*}^t \big)^T \Big) = \sum_{t=0}^{\infty} \lVert L_{*}^t (\Sigma^{11}_X)^{\frac{1}{2}} \rVert_F^2 \nonumber \\
	& \leq m \sum_{t=0}^{\infty} \lVert L_{*}^t (\Sigma^{11}_X)^{\frac{1}{2}} \rVert_2^2 \leq \frac{m c^2 \lVert (\Sigma^{11}_X)^{\frac{1}{2}} \rVert_2^2}{1 - \rho_1^2} .
	\end{align}
	The first term in \eqref{eq:cost_term_2} can be upper bounded as follows,
	\begin{align} \label{eq:cost_term_4}
	& \big\lVert K^{(R)} C_U \big( K^{(R)} 
	\big)^T - K^{*} C_U \big( K^{*} \big)^T \big\rVert_F \nonumber \\
	& = \big\lVert \big(K^{(R)} - K^* + K^* \big) C_U \big( K^{(R)} - K^* + K^*
	\big)^T \nonumber\\
	& \hspace{0.4cm} - K^{*} C_U \big( K^{*} \big)^T \big\rVert_F \nonumber\\
	& = \big\lVert \big(K^{(R)} - K^* \big) C_U \big( K^{(R)} - K^*\big)^T + 
	\nonumber\\
	& \hspace{0.4cm}  2 \big(K^{(R)} - K^* \big) C_U \big( K^* \big)^T 
	\big\rVert_F 
	\nonumber\\
	& \leq \big\lVert C_U \big\rVert_F \Big[ \big\lVert K^{(R)} - K^* 
	\big\rVert_F^2 
	+ 2 \lVert K^* \rVert_F \big\lVert K^{(R)} - K^* \big\rVert_F \Big] .
	\end{align}
	Hence, the difference in costs $J(K^{(R)},F^{(R)}) - J
	(K^{*},F^{*})$ 
	using \eqref{eq:fin_sample_cost}-\eqref{eq:cost_term_4} is
	\begin{align} \label{eq:cost_term_5}
	& J(K^{(R)},F^{(R)}) - J(K^{*},F^{*}) \leq \nonumber \\
	& \hspace{1.5cm}  D_2 \big\lVert K^{(R)} - K^* \big\rVert_F + D_3 \big\lVert 
	K^{(R)} - K^* \big\rVert_F^2 ,
	\end{align}
	where $D_2$ and $D_3$ are defined as follows:
	\begin{align} \label{eq:cost_term_6}
	D_2 := & 2m^{3/2} c^3 \big\lVert \big( C_X + K^{*} C_U \big( K^{*} \big)^T \big) \big\rVert_F \frac{\big\lVert \Sigma^{11}_X \big\rVert_2 \big\lVert B \big\rVert_2 \rho_1}{ (1 - \rho_1)^2} \nonumber \\
	& + 2 m c^2 \lVert K^* \rVert_F \big\lVert C_U \big\rVert_F \frac{ \lVert (\Sigma^{11}_X)^{\frac{1}{2}} \rVert_2^2}{1 - \rho_1^2} , \nonumber \\
	D_3 := & \big\lVert C_U \big\rVert_F \frac{m c^2 \lVert (\Sigma^{11}_X)^{\frac{1}{2}} \rVert_2^2}{1 - \rho_1^2} .
	\end{align}
	Using \eqref{eq:final_bound_K}, \eqref{eq:cost_term_5}-\eqref{eq:cost_term_6} 
	for $\epsilon$ small enough, with probability at 
	least $1 
	- \epsilon^{5}$,
	\begin{align*}
	& J(K^{(R)},F^{(R)}) - J(K^{*},F^{*}) \leq D_1 \epsilon,
	\end{align*}
	where $D_1 := D_2 \big(1 + D_0 \big) + D_3 \big(1 + D_0 \big)^2$.
\end{proof}

\subsection{Approximate $\epsilon$-NE bound}
We now quantify how the approximate MFE obtained from Theorem 
\ref{thm:conv_guarantee} performs in the original finite population game. Let us denote the control law generated by the approximate MFE $(F^{(R)}, K^{(S_R,R)})$ in Algorithm \ref{alg:actorcritic} by $\tilde{\mu}$. The approximate MFE controller for agent $n$ is 
\begin{align}\label{eq:eps_Nash_control}
\tilde{\mu}_t^n(Z^n_t) = - K^{(S_R,R)}_1 Z^n_t - K^{(S_R,R)}_2 (F^{(R)})^t \nu_0
\end{align}
and $\tilde{\mu}^{-n}$ is the joint policy of all agents except agent $n$, $\tilde{\mu}^{-n} = (\tilde\mu^1, \ldots,\tilde\mu^{n-1},\tilde\mu^{n+1},\ldots,\tilde\mu^{N})$.


\begin{theorem} \label{thm:fin_agent_bound}
Let the output cost of Algorithm \ref{alg:actorcritic} for a finite population LQ game for agent $n$ be $J^N_n (\tilde{\mu}^n, \tilde{\mu}^{-n})$, and denote the NE cost of this game by $\inf_{\pi^n \in \Pi} J^N_n(\pi^n,\tilde{\mu}^{-n})$. Then, 
if $R = \Omega(\log(1/\epsilon))$ and $N = \Omega(1/\epsilon^2)$,
\begin{align*}
J^N_n (\tilde{\mu}^n, \tilde{\mu}^{-n}) - \inf_{\pi^n \in \Pi} J^N_n(\pi^n,\tilde{\mu}^{-n}) \leq \epsilon ,
\end{align*}
with probability at least $1-\epsilon^5$. 
\end{theorem}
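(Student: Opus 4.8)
The plan is to sandwich the left-hand side between mean-field quantities and then apply Theorem~\ref{thm:conv_guarantee}. Fix agent $n$, abbreviate $K:=K^{(S_R,R)}$ with blocks $K_1,K_2$, and put $m_t:=\frac{1}{N-1}\sum_{n'\neq n}Z^{n'}_t$ and $L:=A-BK_1$; here $L$ is stable because $K$ is stabilizing for $(\bar A,\bar B)$ (the block argument used near \eqref{eq:def_sigma_k}). Under the decentralized law $\tilde\mu^{-n}$ each $\tilde\mu^{n'}_t$ depends only on $Z^{n'}_t$, so $\{Z^{n'}_\cdot\}_{n'\neq n}$ are i.i.d.\ and independent of agent $n$'s state and controls, with common mean $\bar Z^\dagger_t:=\EE Z^{n'}_t$ obeying $\bar Z^\dagger_{t+1}=L\bar Z^\dagger_t-BK_2(F^{(R)})^t\nu_0$, $\bar Z^\dagger_0=\nu_0$, and with covariance uniformly bounded in $t$. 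First I would record two facts: (i) the parameter bounds $\lVert K-K^*\rVert_2\le(1+D_0)\epsilon$ and $\lVert F^{(R)}-F^*\rVert_2\le\epsilon$ of Theorem~\ref{thm:conv_guarantee}, together with the matrix manipulations of \eqref{eq:prop_1_work}, give by routine perturbation bookkeeping $\sup_t\lVert\bar Z^\dagger_t-\bar Z^*_t\rVert_2=O(\epsilon)$; and (ii)
\begin{align}\label{eq:LLN}
\EE\big\lVert m_t-\bar Z^\dagger_t\big\rVert_2^2=\tfrac{1}{N-1}\tr\!\big(\mathrm{Cov}(Z^{n'}_t)\big)\le\tfrac{c}{N},
\end{align}
for some $c$ uniform in $t$ --- this is the only place where $N$ enters.

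Next I would bound the on-policy cost. Since $\tilde\mu^n$ is also decentralized, $Z^n_t$ has the same law in the $N$-player game as under the mean-field dynamics \eqref{eq:dynamics} with controller $\tilde\mu$, and $Z^n_t\perp m_t$. Writing $\lVert Z^n_t-m_t\rVert_{C_Z}^2=\lVert Z^n_t-\bar Z^\dagger_t\rVert_{C_Z}^2-2\langle Z^n_t-\bar Z^\dagger_t,C_Z(m_t-\bar Z^\dagger_t)\rangle+\lVert m_t-\bar Z^\dagger_t\rVert_{C_Z}^2$, the cross term has zero expectation (as $\EE m_t=\bar Z^\dagger_t$ and $Z^n_t\perp m_t$) and \eqref{eq:LLN} controls the last term, so $J^N_n(\tilde\mu^n,\tilde\mu^{-n})\le J(\tilde\mu,\bar Z^\dagger)+O(1/N)$, where $J(\tilde\mu,\bar Z^\dagger)$ is the mean-field cost of $\tilde\mu$ against the reference $\bar Z^\dagger$. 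By the cost-perturbation estimates already carried out in \eqref{eq:cost_bound}--\eqref{eq:cost_term_6}, together with Theorem~\ref{thm:conv_guarantee} and $\sup_t\lVert\bar Z^\dagger_t-\bar Z^*_t\rVert_2=O(\epsilon)$ (which absorbs the $O(\epsilon)$ gaps between the various mean-field sequences), this equals $J(\mu^*,\bar Z^*)+O(\epsilon)$; hence $J^N_n(\tilde\mu^n,\tilde\mu^{-n})\le J(\mu^*,\bar Z^*)+O(\epsilon)+O(1/N)$.

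For the deviation term I would take any $\pi^n\in\iPi$ with $J^N_n(\pi^n,\tilde\mu^{-n})<\infty$ (the infinite case is vacuous), which by observability of $(A,C_Z^{1/2})$ and $C_U>0$ forces $\sup_t\EE\lVert Z^n_t\rVert_2^2<\infty$. Running the same expansion but now bounding the cross term by Cauchy-Schwarz with \eqref{eq:LLN} gives $J^N_n(\pi^n,\tilde\mu^{-n})\ge\widehat J(\pi^n)-O(1/\sqrt N)$, where $\widehat J(\pi^n)$ is the cost of $\pi^n$ when agent $n$ still sees every $Z^{n'}_\cdot$ but is charged for tracking the \emph{deterministic} sequence $\bar Z^\dagger$. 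Because $\{Z^{n'}_\cdot\}_{n'\neq n}$ are exogenous and independent of agent $n$'s controlled dynamics and of this deterministic target, those observations are useless, so $\inf_{\pi^n}\widehat J(\pi^n)=J(\Phi(\bar Z^\dagger),\bar Z^\dagger)$; by continuity of the mean-field best-response value in the reference ($\bar Z^\dagger$ being $O(\epsilon)$-close to $\bar Z^*$, again via \eqref{eq:prop_1_work}) and $\Phi(\bar Z^*)=\mu^*$ (Definition~\ref{def:mfe}), $J(\Phi(\bar Z^\dagger),\bar Z^\dagger)\ge J(\mu^*,\bar Z^*)-O(\epsilon)$. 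Chaining the two bounds,
\begin{align*}
J^N_n(\tilde\mu^n,\tilde\mu^{-n})-\inf_{\pi^n\in\iPi}J^N_n(\pi^n,\tilde\mu^{-n})\le O(\epsilon)+O(1/\sqrt N).
\end{align*}
Choosing $R=\Omega(\log(1/\epsilon))$ so the Theorem~\ref{thm:conv_guarantee} errors are $O(\epsilon)$ with probability at least $1-\epsilon^5$, $N=\Omega(1/\epsilon^2)$ so that $O(1/\sqrt N)=O(\epsilon)$, and rescaling $\epsilon$ yields the statement; the failure probability is inherited solely from Theorem~\ref{thm:conv_guarantee}, since \eqref{eq:LLN} and the Cauchy-Schwarz step are statements about expectations.

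I expect the crux to be the deviation lower bound, i.e.\ showing uniformly over \emph{all} admissible $\pi^n$ --- including ones that exploit the full shared history $I^n_t$ --- both that observing the other agents' trajectories cannot beat the mean-field best-response value $J(\Phi(\bar Z^\dagger),\bar Z^\dagger)$ (a minimal-filtration argument exploiting that the $Z^{n'}$ are independent of agent $n$'s controlled dynamics and of the deterministic target), and that near-optimal $\pi^n$ keep $\sup_t\EE\lVert Z^n_t\rVert_2^2$ bounded, which must come from observability of $(A,C_Z^{1/2})$ and positivity of $C_U$ rather than from any assumed stabilizing structure of $\pi^n$. A secondary, purely bookkeeping, nuisance is that three $O(\epsilon)$-close mean-field sequences appear --- the feedforward reference $(F^{(R)})^t\nu_0$ baked into $\tilde\mu$, the true empirical mean $\bar Z^\dagger$ of \eqref{eq:LLN}, and the equilibrium $\bar Z^*$ --- and the perturbation estimates relating them must be propagated through the costs and the best-response value.
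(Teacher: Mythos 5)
Your proposal is correct in outline and follows the same sandwich structure as the paper's proof: split the Nash gap into (on-policy cost minus a mean-field cost) and (mean-field cost minus the deviating agent's cost), control the empirical-mean fluctuation at rate $O(1/\sqrt{N})$, and absorb the learning error via Theorem \ref{thm:conv_guarantee} with $R=\Omega(\log(1/\epsilon))$, inheriting the $1-\epsilon^5$ probability solely from that theorem. The differences are in how the intermediate estimates are obtained. The paper anchors both halves at the reformulated cost $J(K^{(S_R,R)},F^{(R)})$ and outsources the population-coupling steps to Lemma 3 and Lemma 4 of \cite{moon2014discrete} (and Proposition 3.1 of \cite{yang2019global} to show $\limsup_T \frac1T\sum_t\EE\lVert \bar Z^N_{n,t}-\bar Z_t\rVert_2^2 = \Os(1/(N-1))$), whereas you re-derive these inline: independence of the decentralized agents gives the $1/(N-1)$ variance and an exactly vanishing cross term for the on-policy bound (yielding a slightly sharper $O(1/N)$ there, though the overall rate remains $O(1/\sqrt N)$ because of the deviation term), and Cauchy--Schwarz plus a filtration argument handles the deviation bound. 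You also anchor at $J(\mu^*,\bar Z^*)$ instead, which costs you an extra continuity-of-best-response step ($\bar Z^\dagger$ versus $\bar Z^*$) that the paper sidesteps by invoking the Theorem \ref{thm:conv_guarantee} cost bound directly. The two points you flag as the crux --- that observing other agents' (exogenous, independent) trajectories cannot improve tracking of a deterministic reference, and that near-optimal deviating policies have suitably bounded second moments via observability of $(A,C_Z^{1/2})$ and $C_U>0$ --- are exactly what the paper delegates to Lemma 4 of \cite{moon2014discrete}; your plan is only complete once those are filled in, but you identify them correctly and they are standard in the MFG $\epsilon$-Nash literature, so this is a matter of self-containedness rather than a gap. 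One small simplification available to both arguments: since all the reference trajectories involved ($\bar Z^\dagger$, $(F^{(R)})^t\nu_0$, $\bar Z^*$) decay geometrically, the time-averaged costs are insensitive to their transients, so the ``three $O(\epsilon)$-close sequences'' bookkeeping you worry about largely washes out in the $\limsup$.
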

\begin{proof}
	The quantity $J^N_n (\tilde{\mu}^n, \tilde{\mu}^{-n}) - \inf_{\pi^n \in \Pi} 
	J^N_n(\pi^n, \tilde{\mu}^{-n})$ can be broken up into two terms,
	\begin{align} \label{eq:exp1_thm11}
	& J^N_n (\tilde{\mu}^n, \tilde{\mu}^{-n}) - \inf_{\pi^n \in \Pi} 
	J^N_n(\pi^n, \tilde{\mu}^{-n}) = \nonumber \\
	& \hspace{1cm} J^N_n (\tilde{\mu}^n, \tilde{\mu}^{-n}) - J(K^{(S_R,R)},F^{(R)}) + \nonumber \\
	& \hspace{1cm} J(K^{(S_R,R)},F^{(R)}) - \inf_{\pi^n \in \Pi} J^N_n(\pi^n, \tilde{\mu}^{-n}).
	\end{align}
	For simplicity let us denote by $\bar{Z} = (\bar{Z}_0, \bar{Z}_1, \ldots)$ the mean-field trajectory consistent with the control law $\tilde\mu$. Similarly let $\bar{Z}^N_{n}= (\bar{Z}^N_{n,0}, \bar{Z}^N_{n,1}, \ldots)$ be is the empirical mean-field trajectory under $\tilde{\mu}$ such that,
	\begin{align*}
	\bar{Z}^N_{n,t} = \frac{1}{N-1} \sum_{n' \neq n} Z^{n'}_t, 
	\end{align*} 
	where $Z^{n} = (Z^{n}_0,Z^{n}_1,\ldots)$ is the trajectory of agent $n$ under control law $\tilde \mu$. 
	Using Lemma 3 in \cite{moon2014discrete}, since it is applicable for any stabilizing controller,
	\begin{align} \label{eq:exp1_thm21}
	& J^N_n (\tilde{\mu}^n, \tilde{\mu}^{-n}) - J(K^{(S_R,R)},F^{(R)}) = \nonumber \\
	& \hspace{1cm} \Os \Bigg( \sqrt{\limsup_{T \rightarrow \infty} \EE \big( \sum_{t=0}^{T-1} \big\lVert \bar{Z}^{N}_{n,t} - \bar{Z}_t \big\rVert_2^2 \big)/T} \Bigg) .
	\end{align}
	Now we obtain an upper bound for the expression on the RHS. Using \eqref{eq:finitesystem} the dynamics of $\bar{Z}^{N}_{n}$ can be written as
	\begin{align*}
	\bar{Z}^{N}_{n,t+1} = (A - B K_1^{(S_R,R)}) \bar{Z}^{N}_{n,t} - B K_2^{(S_R,R)} \bar{Z}_t + W^{N}_{n,t},
	\end{align*}
	where 
	\begin{align*}
	W^{N}_{n,t} := \sum_{n' \neq n} (W^{n'}_t + B \zeta^{n'})/(N-1),
	\end{align*} 
	is generated i.i.d. with distribution $\Ns(0,\Sigma^N_w)$ where 
	\begin{align*}
	\Sigma^N_w = \frac{\Sigma_w + \sigma^2 B B^T}{N-1}
	\end{align*} 
	and $\EE [\bar{Z}^{N}_{n,0}] = \nu_0$. If we augment $\bar{Z}^{N}_{n,t}$ with $\bar{Z}_t$, the state dynamics of this augmented system can be written as
	\begin{align*}
	\left( \begin{array}{c}
	\bar{Z}^{N}_{n,t+1} \\ \bar{Z}_{t+1}\end{array} \right) = \big(\bar{A} - \bar{B} K^{(S_R,R)}\big) \left( \begin{array}{c}\bar{Z}^{N}_{n,t} \\ \bar{Z}_{t} \end{array} \right) + \left( \begin{array}{c}W^{N}_{n,t} \\ 0\end{array} \right)
	\end{align*}
	and the expression inside the parentheses on the RHS of \eqref{eq:exp1_thm21} can be expressed as the cost function: 
	\begin{align*} 
	& \limsup_{T \rightarrow \infty} \frac{1}{T} \EE \bigg( \sum_{t=0}^{T-1} \bigg\lVert \begin{array}{c}\bar{Z}^{N}_{n,t} \\ \bar{Z}_{t}\end{array} \bigg\rVert_D^2 \bigg) \text{ where } D = \left( \begin{array}{cc} I & -I \\ -I & I \end{array} \right) .
	\end{align*}
	Since $K^{(S_R,R)}$ is a stabilizing controller for system $(\bar{A}, \bar{B})$, using Proposition 3.1 from \cite{yang2019global}, we know that the cost in expression above is $\Os(\tr(\Sigma^N_w)) = \Os(1/(N-1))$. Hence,
	\begin{align} \label{eq:bound_finite_vs_infty1}
	\sqrt{\limsup_{T \rightarrow \infty} \EE \big( \sum_{t=0}^{T-1} \big\lVert \bar{Z}^{N}_{n,t} - \bar{Z}_t \big\rVert_2^2 \big)/T} = \Os \bigg( \frac{1}{\sqrt{N-1}} \bigg).
	\end{align}
	
	Next we use the approach of Lemma 4 in \cite{moon2014discrete} to obtain an upper bound on 
	\begin{align*}
	J(K^{(S_R,R)},F^{(R)}) - J^N_n(\pi^n, \tilde{\mu}^{-n}),
	\end{align*}
	for any $\pi^n \in \Pi$. Denote the trajectory of agent $n$ under control law $\pi^n$ by $\check{Z}^n_t$. The cost for agent $n$ under control law $\pi^n$, while the other agents in the $N$-agent game are following the approximate MFE controller $\tilde{\mu}$, is
	\begin{align} \label{eq:exp2_thm21}
	& J^N_n(\pi^n,\tilde{\mu}^{-n}) \nonumber \\
	& = \limsup_{T \rightarrow \infty} \frac{1}{T} \sum^{T-1}_{t=0} \EE \big[ \lVert \check{Z}^n_t - \bar{Z}_t + \bar{Z}_t - \bar{Z}^N_{n,t} \rVert^2_{C_Z} + \lVert U_t \rVert^2_{C_X} \big] \nonumber \\
	& \geq \limsup_{T \rightarrow \infty} \frac{1}{T} \sum^{T-1}_{t=0} \EE \big[ \lVert \check{Z}^n_t - \bar{Z}_t \rVert^2_{C_Z} + \lVert U_t \rVert^2_{C_X} \big] \nonumber \\
	& \hspace{1.0cm}  + \limsup_{T \rightarrow \infty} \frac{2}{T} \sum_{t=0}^{T-1} \EE \big[ (\check{Z}^n_t - \bar{Z}_t)^T C_Z (\bar{Z}_t - \bar{Z}^N_{n,t}) \big] \nonumber \\
	& \geq J(\pi^n,\bar{Z}) + \limsup_{T \rightarrow \infty} \frac{2}{T} \sum_{t=0}^{T-1} \EE \big[ (\check{Z}^n_t - \bar{Z}_t)^T C_Z (\bar{Z}_t - \bar{Z}^N_{n,t}) \big] \nonumber \\
	& \geq J(K^{(S_R,R)},F^{(R)}) - 2 D_1 \lVert F^{(1)} - F^* \rVert_2 T^R_P + \nonumber \\
	& \hspace{1.0cm} \limsup_{T \rightarrow \infty} \frac{2}{T} \sum_{t=0}^{T-1} \EE \big[ (\check{Z}^n_t - \bar{Z}_t)^T C_Z (\bar{Z}_t - \bar{Z}^N_{n,t} )\big],
	\end{align}
	with probability at least $1-\epsilon^5$. 
	The last inequality is obtained by using the definition of $R$ and the upper bound on $J(K^{(S_R,R)},F^{(R)}) - J(K^*,F^*)$ from Theorem \ref{thm:conv_guarantee}. The last term on the RHS of \eqref{eq:exp2_thm21} can be bounded by techniques used in Lemma 4 in \cite{moon2014discrete}:
	\begin{align} \label{eq:exp3_thm21}
	& \bigg\lvert \limsup_{T \rightarrow \infty} \frac{2}{T} \sum_{t=0}^{T-1} \EE \big[ (\check{Z}^n_t - \bar{Z}_t)^T C_Z (\bar{Z}_t - \bar{Z}^N_{n,t} \big] \bigg\rvert = \nonumber \\
	& \hspace{1.5cm} \Os \Bigg( \sqrt{\limsup_{T \rightarrow \infty} \EE \big( \sum_{t=0}^{T-1} \big\lVert \bar{Z}^{N}_{n,t} - \bar{Z}_t \big\rVert_2^2 \big)/T} \Bigg).
	\end{align}
	Using \eqref{eq:bound_finite_vs_infty1} -- \eqref{eq:exp3_thm21}, 
	\vspace{-0.15cm}\begin{align} \label{eq:exp4thm21}
	& J(K^{(S_R,R)},F^{(R)}) - J^N_n(\pi^n,\tilde{\mu}^{-n}) = \nonumber \\
	& \hspace{3.5cm} \Os \big( T^R_P \big) + \Os \Big( \frac{1}{\sqrt{N-1}} \Big), 
	\end{align}
	with probability at least $1-\epsilon^5$. Now using \eqref{eq:exp1_thm11}, \eqref{eq:exp1_thm21}, \eqref{eq:bound_finite_vs_infty1} and \eqref{eq:exp4thm21}, we arrive at
	\begin{align*}
	J^N_n (\tilde{\mu}^n, \tilde{\mu}^{-n}) - J^N_n(\pi^n,\tilde{\mu}^{-n})= \Os \big( T^R_P \big) + \Os \Big( \frac{1}{\sqrt{N-1}} \Big),
	\end{align*}
	with probability at least $1-\epsilon^5$. Hence, if $R = \Omega(\log(1/\epsilon))$ and $N = \Omega(1/\epsilon^2)$, then, with probability at least $1-\epsilon^5$,
	\begin{align*}
	J^N_n (\tilde{\mu}^n, \tilde{\mu}^{-n}) - J^N_n(\pi^n,\tilde{\mu}^{-n}) \leq \epsilon ,
	\end{align*}
	for any $\pi^n \in \Pi$ and this concludes the proof.
	
\end{proof}

\section{Concluding Remarks}
\label{sec:Conc}

This paper has proposed an RL algorithm to find the approximate MFE of the non-stationary LQ-MFG 
in a model-free setting. This is achieved by, 1) reformulating the LQT problem 
into a forward-in-time problem, and 2) generalizing actor-critic to the unmixed MC setting. 
Furthermore, the proposed learning algorithm yields an approximate MFE, which is shown to be an $\epsilon$-NE, dependent upon both the number of agents in the finite population game and the number of iterations of the learning algorithm. 

\bibliographystyle{IEEEtran} 
\bibliography{references,MARL_Springer_1,MARL_Springer_2,RL}

\end{document}